\DeclareMathOperator{\Clo}{Clo}
\DeclareMathOperator{\proj}{pr}
\DeclareMathOperator{\Expanded}{ExpCov}
\DeclareMathOperator{\ConOne}{Con}
\DeclareMathOperator{\CSP}{CSP}
\newcommand{\cover}[1]{
{{#1}^{*}}
}
\renewcommand{\le}{\leqslant}
\renewcommand{\ge}{\geqslant}
\theoremstyle{definition}
\theoremstyle{plain}
\newtheorem{thm}{Theorem}[section]
\newtheorem{step}{Step}
\newtheorem{lem}[thm]{Lemma}
\begin{document}

\title{A modification of the CSP algorithm for infinite languages}

\author{Dmitriy Zhuk\\
Department of Mechanics and Mathematics\\
Moscow State University\\
Moscow, Russia
}
\date{}
\maketitle

\begin{abstract}
Constraint Satisfaction Problem on finite sets is known to be NP-complete in general but 
certain restrictions on the constraint language can ensure tractability.
It was proved \cite{BulatovProofCSP,MyProofCSP} that if a constraint language has a weak near unanimity polymorphism
then the corresponding constraint satisfaction problem is tractable, otherwise it is NP-complete. 
In the paper we present a modification of the algorithm from \cite{MyProofCSP} that works in polynomial time even for infinite constraint languages.
\end{abstract}

\section{Introduction}%
Formally, the \emph{Constraint Satisfaction Problem (CSP)} is defined as a triple $\langle \mathbf{X} , \mathbf{D} , \mathbf{C} \rangle$,
where
\begin{itemize}
\item
$\mathbf{X}=\{x_1,\ldots ,x_n\}$ is a set of variables,
\item
$\mathbf{D}=\{D_{1},\ldots ,D_{n}\}$ is a set of the respective domains,
\item
$\mathbf{C}=\{C_{1},\ldots ,C_{m}\}$ is a set of constraints,
\end{itemize}
where
each variable $x_{i}$ can take on values in the nonempty domain $D_{i}$,
every \emph{constraint} $C_{j}\in \mathbf{C}$ is a pair
$(t_{j},\rho_{j})$ where
$t_{j}$ is a tuple of variables of length $m_{j}$, called the \emph{constraint scope},
and $\rho_{j}$ is an $m_{j}$-ary relation on the corresponding domains,
called the \emph{constraint relation}.

The question is whether there exists \emph{a solution} to
$\langle \mathbf{X} , \mathbf{D} , \mathbf{C} \rangle$,
that is a mapping that assigns a value from $D_{i}$ to every variable $x_{i}$
such that
for each constraints $C_{j}$ the image of the constraint scope is a member of the constraint relation.

In this paper we consider only CSP over finite domains.
The general CSP is known to be NP-complete \cite{Num26, Num30}; however, certain restrictions
on the allowed form of constraints involved may ensure tractability (solvability in polynomial time)
\cite{Num4,Num20,Num22,Num23,CSPconjecture,BulatovAboutCSP}.
Below we provide a formalization to this idea.

To simplify the presentation we assume that
all the domains $D_{1},\ldots,D_{n}$ are subsets of a finite set $A$.
By $R_{A}$ we denote the set of all finitary relations on $A$,
that is, subsets of $A^{m}$ for some $m$.
Then all constraint relations can be viewed as relations from $R_{A}$.

For a set of relations $\Gamma\subseteq R_{A}$ by $\CSP(\Gamma)$
we denote the Constraint Satisfaction Problem where all
the constraint relations are from $\Gamma$.
The set $\Gamma$ is called \emph{a constraint language}.
Another way to formalize the Constraint Satisfaction Problem is
via conjunctive formulas.
Every $h$-ary relation on $A$ can be viewed as
a predicate, that is, a
mapping $A^{h}\rightarrow \{0,1\}$.
Suppose $\Gamma\subseteq R_{A}$, then $\CSP(\Gamma)$ is the following decision problem:
given a formula
$$\rho_{1}(x_{1,1},\ldots,x_{1,n_{1}})
\wedge
\dots
\wedge
\rho_{s}(x_{s,1},\ldots,x_{1,n_{s}})$$
where $\rho_{i}\in \Gamma$ for every $i$;
decide whether this formula is satisfiable.

It is well known that many combinatorial problems can be expressed as $\CSP(\Gamma)$
for some constraint language $\Gamma$.
Moreover, for some sets $\Gamma$ the corresponding decision problem can be solved in polynomial time;
while for others it is NP-complete.
It was conjectured that
$\CSP(\Gamma)$ is either in P, or NP-complete \cite{FederVardi}.

An operation $f$ is called \emph{idempotent} if $f(x,x,\ldots,x) = x$.
An operation $f$ is called \emph{a weak near-unanimity operation (WNU)} if
$f(y,x,\ldots,x) = f(x,y,x,\ldots,x) = \dots = f(x,x,\ldots,x,y).$

In the paper we present a modification of the algorithm from \cite{MyProofCSP}
that works in polynomial time for infinite constraint languages and therefore
prove CSP dichotomy conjecture for infinite constraint languages.

\begin{thm}\label{maintheorem}
Suppose $\Gamma\subseteq R_{A}$ is a set of relations.
Then $\CSP(\Gamma)$ can be solved in polynomial time if there exists a WNU
preserving $\Gamma$;
$CSP(\Gamma)$ is NP-complete otherwise.
\end{thm}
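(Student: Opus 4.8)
\medskip
\noindent\textbf{Proof strategy.}
The statement splits into a hardness half --- $\CSP(\Gamma)$ is NP-complete when no WNU preserves $\Gamma$ --- and a tractability half, and essentially all the work is in the latter. For hardness, suppose no WNU preserves $\Gamma$. By the standard reduction to a core and the addition of the finitely many constant relations --- legitimate for infinite $\Gamma$ because it involves only the fixed finite set $A$ and the relations actually occurring in an instance --- we may assume the algebra $\mathbf{A}=(A;\Pol(\Gamma))$ is idempotent and still omits a WNU term. By Mar\'oti--McKenzie together with the cyclic-term theorem of Barto--Kozik, a finite idempotent algebra omits a WNU term iff it omits a cyclic term of one fixed prime arity $p>|A|$. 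Whether $\Pol(\Gamma)$ contains a $p$-ary cyclic operation is a statement about a single arity: the set of $p$-ary operations preserving $\Gamma$ is the intersection, over finite $\Gamma_0\subseteq\Gamma$, of a decreasing family of subsets of a finite set, so it already equals this intersection for some finite $\Gamma_0$. Then $\Pol(\Gamma_0)$ omits a WNU, hence by the finite dichotomy \cite{BulatovProofCSP,MyProofCSP} the problem $\CSP(\Gamma_0)$ is NP-complete, and it reduces trivially to $\CSP(\Gamma)$.

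\emph{Tractability, set-up.} Fix now a WNU $\omega$ preserving $\Gamma$. After the same core-and-constants preprocessing we may assume $\omega$ is idempotent and that every constraint relation is a subpower of the \emph{fixed} finite algebra $\mathbf{A}=(A;\omega)$, so that $\CSP(\Gamma)$ is a subproblem of $\CSP(\Inv(\omega))$. The observation that makes the infinite-language case tractable is a bookkeeping one: all the structural quantities the algorithm of \cite{MyProofCSP} is sensitive to --- $|A|$, $\ar(\omega)$, the number of subuniverses of small powers of $\mathbf{A}$, the number of congruences of subalgebras, the sizes of the ``covers'' $\cover{\rho}$ --- are \emph{absolute constants} depending only on $\mathbf{A}$, whereas the quantities that grow with the instance are only the number $n$ of variables, the number $m$ of constraints, and the maximal size $\ell$ of a constraint relation. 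Thus ``polynomial time'' means $\mathrm{poly}(n,m,\ell)$ with the constants allowed to depend on $\mathbf{A}$.

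\emph{Tractability, the algorithm.} The plan is to run a carefully re-audited version of the recursive algorithm of \cite{MyProofCSP}, whose skeleton is: (1) enforce local consistency (cycle-consistency) by propagation, shrinking the domains and the constraint relations, and reject if some domain becomes empty; (2) accept if all domains are singletons; (3) otherwise, on some non-singleton domain locate one of the three structures that the presence of $\omega$ guarantees --- a proper binary absorbing subuniverse, a proper center, or a non-trivial congruence with affine quotient --- and perform the matching reduction: restrict the domain to the proper subuniverse in the absorbing or central case (after solving a bounded number of auxiliary subinstances to certify soundness), or, in the affine case, replace the relevant fragment of the instance by a linear system over the associated module and solve it by Gaussian elimination, branching on the outcome; (4) recurse. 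Soundness and completeness are exactly as in \cite{MyProofCSP}; what must be re-established is that every step runs in $\mathrm{poly}(n,m,\ell)$ \emph{uniformly in} $\Gamma$.

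\emph{The main obstacle.} This uniformity is the crux of the modification. In the finite-language algorithm several subroutines are polynomial only because $\Gamma$ is finite --- for instance, steps that range over all bounded-arity relations of $\langle\Gamma\rangle$, steps whose cost is controlled by the maximum arity of relations in $\Gamma$, or steps that build new constraints via pp-definitions whose length is bounded using $|\Gamma|$. The plan is to replace each such place so that: (i) every auxiliary relation ever constructed has size $\mathrm{poly}(n,m,\ell)$ and comes from a pp-formula with a constant number of conjuncts (a bounded chain of intersections and projections of relations already present); (ii) consistency and propagation are implemented so that their cost on a constraint of arity $r$ is polynomial in the \emph{size} of that constraint rather than exponential in $r$; and (iii) the recursion tree has $\mathrm{poly}(n,m)$ leaves, obtained by checking that each reduction strictly decreases the potential $\sum_i|D_i|$ or the number of non-singleton domains. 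I expect the hardest single point to be precisely the consistency procedure together with the handling of the covers $\cover{\rho}$ used to pass between a congruence and its blocks while keeping every intermediate relation of polynomial size; once those are made arity-robust, the correctness and complexity analysis of \cite{MyProofCSP} should transfer with only cosmetic changes.
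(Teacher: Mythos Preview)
Your high-level diagnosis --- that the finite-language algorithm of \cite{MyProofCSP} must be re-audited for places where its cost secretly depends on $|\Gamma|$ --- is correct, but you misidentify the bottleneck and propose the wrong potential function, so the plan as written would not close.

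The expensive steps are not the consistency/propagation routines (cycle-consistency is already polynomial in $n,m,\ell$) but Steps~3 and~11 of \cite{MyProofCSP}, where a constraint is replaced by \emph{all weaker constraints}. For finite $\Gamma$ the number of such replacements is bounded by $|\Gamma|$; for infinite $\Gamma$ there is no bound, and your suggestion that every auxiliary relation ``comes from a pp-formula with a constant number of conjuncts'' does not control the length of a chain of successive weakenings. The paper's fix is specific: replace a constraint only by its \emph{congruence-weakened} versions (move $\ConOne(\rho,1)$ up to the next congruence), and precede this by forcing every constraint relation to be essential, to have the parallelogram property, and to have $\ConOne(\rho,1)$ irreducible. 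Lemma~\ref{essentialSize} is the reason the parallelogram closure runs in polynomial time: an essential relation preserved by an $m$-ary WNU already has at least $2^{n-m+1}$ tuples, so $|\rho|$ dominates the arity.

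Your recursion bound via the potential $\sum_i|D_i|$ also fails: the type-3 recursion in Step~\ref{FullSimplificationStep} does not shrink any domain. The correct potential is the family of maximal compatible equivalence relations $MaxComp(C,x)$, which Lemma~\ref{IncreaseAllCon} shows strictly increases at each congruence-weakening, giving a depth bound of $2^{|A|^2}$ (Lemma~\ref{RecursionTwoDepth}). Finally, the expectation that correctness ``transfers with only cosmetic changes'' is too optimistic: after congruence-weakening the instance is no longer crucial in the sense of \cite{MyProofCSP}, so the bridge/linear-variable construction needs a new argument (Theorem~\ref{addingOneVariable}) together with Theorem~\ref{ParPropertyMain} to rule out spurious solutions when the parallelogram closure is taken. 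Without these ingredients your plan has a genuine gap at exactly the step where the finite and infinite cases diverge.
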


Note that the algorithm presented in \cite{BulatovProofCSP} also works for infinite constraint languages.

The paper is organized as follows.
In Section~\ref{Definition} we give all necessary definitions,
in Section~\ref{Algorithm}
we explain the algorithm starting with the new ideas.
In Section~\ref{CorretnessSection} we prove statements that show the correctness of the algorithm.

\section{Definitions}\label{Definition}
A set of operations is called \emph{a clone} if it is closed under composition and contains all projections.
For a set of operations $M$ by $\Clo(M)$ we denote the clone generated by $M$.

An idempotent WNU $w$ is called \emph{special} if $x \circ (x \circ y) = x \circ y$, where
$x \circ y = w(x,\dots,x,y)$.
It is not hard to show that for any idempotent WNU $w$ on a finite set there exists a special WNU $w'\in\Clo(w)$
(see Lemma 4.7 in \cite{miklos}).

A relation $\rho \subseteq A_{1}\times\dots\times A_{n}$ is called \emph{subdirect} if
for every $i$ the projection of $\rho$ onto the $i$-th coordinate is $A_{i}$.
For a relation $\rho$ by $\proj_{i_1,\ldots,i_{s}}(\rho)$
we denote the projection of $\rho$ onto the coordinates
$i_1,\ldots,i_{s}$.

\textbf{Algebras.}
\emph{An algebra} is a pair $\mathbf{A}:=(A;F)$, where $A$ is a finite set, called \emph{universe},
and $F$ is a family of operations on $A$, called \emph{basic operations of $\mathbf{A}$}.
In the paper we always assume that we have a special WNU preserving all constraint relations.
Therefore, every domain $D$ can be viewed as an algebra $(D;w)$.
By $\Clo(\mathbf{A})$ we denote the clone generated by all basic operations of $\mathbf{A}$.



\textbf{Congruences.}
An equivalence relation $\sigma$ on the universe of an algebra $\mathbf{A}$ is called \emph{a congruence}
if it is preserved by every operation of the algebra.
A congruence (an equivalence relation) is called \emph{proper}, if it is not equal 
to the full relation $A\times A$.
We use standard universal algebraic notions of term operation, subalgebra,  factor algebra, product of algebras,
see~\cite{bergman2011universal}.
We say that a subalgebra $\mathbf{R} = (R;F_R)$ is
\emph{a subdirect subalgebra} of $\mathbf{A}\times \mathbf{B}$
if $R$ is a subdirect relation in $A\times B$.

We say that the $i$-th variable of a relation $\rho$ is \emph{compatible with an equivalence relation $\sigma$}
if $(a_{1},\ldots,a_{n})\in\rho$ and $(a_{i},b_{i})\in\sigma$
implies
$(a_{1},\ldots,a_{i-1},b_{i},a_{i+1},\ldots,a_{n})\in\rho$.
We say that a relation is \emph{compatible} with $\sigma$ if every variable of this relation is compatible with $\sigma$.

For a relation $\rho$ by $\ConOne(\rho,i)$
we denote the binary relation $\sigma(y,y')$ defined by
$$\exists x_{1}\dots\exists x_{i-1}\exists x_{i+1}\dots\exists x_{n}\;\rho(x_{1},\ldots,x_{i-1},y,x_{i+1},\ldots,x_{n})\wedge
\rho(x_{1},\ldots,x_{i-1},y',x_{i+1},\ldots,x_{n}).$$
For a constraint $C = ((x_{1},\ldots,x_{n}),\rho)$,
by $\ConOne(C,x_{i})$ we denote $\ConOne(\rho,i)$.


\textbf{Essential and critical relations.}
A relation $\rho$ is called \emph{essential} if it cannot be represented as a conjunction of relations with smaller arities.
It is easy to see that
any relation $\rho$ can be represented as a conjunction of essential relations.
A relation $\rho\subseteq A_{1}\times\dots\times A_{n}$ is called \emph{critical}
if it cannot be represented as an intersection of other subalgebras
of $\mathbf A_{1}\times\dots\times \mathbf A_{n}$
and it has no dummy variables.

A tuple $(a_{1},a_{2},\ldots,a_{n})$ is called \textit{essential for an $n$-ary relation $\rho$}
if $(a_{1},a_{2},\ldots,a_{n})\notin\rho$ and
there exist $b_{1},b_{2},\ldots,b_{n}$ such that
$(a_{1},\ldots,a_{i-1},b_{i},a_{i+1},\ldots,a_{n})\in\rho$
for every $i\in \{1,2,\ldots,n\}$.

It is not hard to check the following lemma.
\begin{lem}\label{sushnabor}\cite{dm_post, MinimalClones, mvlsc}
Suppose $\rho\subseteq A^{n}$, where $n\ge 1$. Then the following conditions are equivalent:
\begin{enumerate}
\item $\rho$ is an essential relation;
\item there exists an essential tuple for $\rho$.
\end{enumerate}
\end{lem}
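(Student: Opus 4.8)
The plan is to prove the two implications separately, each by contraposition, and in both directions the key object is the family of projections of $\rho$ onto all $(n-1)$-element sets of coordinates. Roughly speaking, the lemma amounts to the observation that a tuple $\bar a$ all of whose $(n-1)$-element subtuples extend to tuples of $\rho$ either already lies in $\rho$ or is an essential tuple for $\rho$.

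For $(2)\Rightarrow(1)$ I would show the contrapositive: if $\rho$ is not essential then it has no essential tuple. Assume $\rho$ is equivalent to a conjunction $\bigwedge_{j}\sigma_j$ of relations $\sigma_j$, where each $\sigma_j$ involves only the variables in some proper subset $I_j\subsetneq\{1,\dots,n\}$ of coordinates. Let $\bar a=(a_1,\dots,a_n)\notin\rho$ be arbitrary. Then some conjunct $\sigma_{j_0}$ fails on the restriction of $\bar a$ to $I_{j_0}$. Pick a coordinate $i\notin I_{j_0}$. For every $b$, the tuple $(a_1,\dots,a_{i-1},b,a_{i+1},\dots,a_n)$ has the same restriction to $I_{j_0}$ as $\bar a$, hence still fails $\sigma_{j_0}$ and therefore is not in $\rho$. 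So $\bar a$ cannot be an essential tuple, and since $\bar a$ was an arbitrary tuple outside $\rho$, there is no essential tuple for $\rho$.

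For $(1)\Rightarrow(2)$ I would again argue the contrapositive: if $\rho$ has no essential tuple then $\rho$ is not essential. For each $i\in\{1,\dots,n\}$ let $\rho^{(i)}=\proj_{1,\dots,i-1,i+1,\dots,n}(\rho)$, a relation of arity $n-1$, and consider the formula $\Phi(x_1,\dots,x_n):=\bigwedge_{i=1}^{n}\rho^{(i)}(x_1,\dots,x_{i-1},x_{i+1},\dots,x_n)$. Every tuple of $\rho$ satisfies $\Phi$. Conversely, if $\bar a=(a_1,\dots,a_n)$ satisfies $\Phi$, then for each $i$ there is some $b_i$ with $(a_1,\dots,a_{i-1},b_i,a_{i+1},\dots,a_n)\in\rho$; if moreover $\bar a\notin\rho$, this says precisely that $\bar a$ is an essential tuple for $\rho$, contradicting the hypothesis. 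Hence $\Phi$ defines $\rho$, which exhibits $\rho$ as a conjunction of relations of arity $n-1<n$, so $\rho$ is not essential.

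The only point that needs care is the degenerate case $n=1$, where ``arity $n-1$'' means arity $0$: there $\Phi$ collapses to the single $0$-ary relation recording whether $\rho\neq\emptyset$, and the assumption that $\rho$ has no essential tuple forces $\rho\in\{\emptyset,A\}$, both of which are not essential under the usual convention, so the equivalence still holds. Apart from this boundary bookkeeping, both directions are short verifications and I do not anticipate any genuine obstacle; the entire content is the interplay between ``missing a conjunct that omits coordinate $i$'' and ``being free to change coordinate $i$'', which is exactly the defining feature of an essential tuple.
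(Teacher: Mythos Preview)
Your argument is correct. Note, however, that the paper does not actually prove this lemma: it is stated with the remark ``It is not hard to check the following lemma'' and a citation to \cite{dm_post, MinimalClones, mvlsc}, with no proof given in the paper itself. Your contrapositive proofs of both directions---recognising that a decomposition into lower-arity conjuncts lets some coordinate vary freely, and conversely that the conjunction of the $(n-1)$-ary projections $\rho^{(i)}$ recovers $\rho$ exactly when no essential tuple exists---are the standard arguments one finds in the cited sources, and your handling of the boundary case $n=1$ is fine.
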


\textbf{Parallelogram property.}
We say that a relation $\rho$ \emph{has the parallelogram property}
if any permutation of its variables gives a relation $\rho'$ satisfying
$$\forall \alpha_{1},\beta_{1},\alpha_2,\beta_2\colon (\alpha_{1}\beta_2,\beta_1\alpha_2,\beta_1\beta_2\in\rho'
\Rightarrow \alpha_1\alpha_2\in\rho').$$

We say that \emph{the $i$-th variable of a relation $\rho$ is rectangular}
if for every $(a_{i},b_{i})\in\ConOne(\rho,i)$ and $(a_{1},\ldots,a_{n})\in\rho$
we have $(a_{1},\ldots,a_{i-1},b_{i},a_{i+1},\ldots,a_{n})\in\rho$.
We say that a relation is \emph{rectangular} if all of its variables are rectangular.
The following facts can be easily seen:
if the $i$-th variable of $\rho$ is rectangular then $\ConOne(\rho,i)$ is a congruence;
if a relation has the parallelogram property then it is rectangular.

\textbf{Polynomially complete algebras.}
An algebra $(A;F_{A})$ is called \emph{polynomially complete (PC)}
if the clone generated by $F_{A}$ and all constants on $A$ is the clone of all operations on $A$.


\textbf{Linear algebra.}
A finite algebra $(A;w_{A})$ is called \emph{linear} if
it is isomorphic to $(\mathbb{Z}_{p_1}\times\dots\times \mathbb{Z}_{p_s};x_1+\ldots+x_n)$
for prime numbers $p_{1},\ldots,p_{s}$.
It is not hard to show that for every algebra $(B;w_{B})$ there exists a minimal congruence $\sigma$, called
\emph{the minimal linear congruence}, such that
$(B;w_{B})/\sigma$ is linear.

\textbf{Absorption.}
%
Let $\mathbf{B}=(B;F_{B})$ be a subalgebra of $\mathbf{A}=(A;F_{A})$.
We say that $B$ absorbs $A$
if there exists $t\in \Clo(\mathbf{A})$ such that
$t(B,B,\dots,B,A,B,\dots,B) \subseteq B$ for any position of $A$.
In this case we also say that $B$ is an absorbing subuniverse of $\mathbf A$.
If the operation $t$ can be chosen binary or ternary then
$B$ is called \emph{a binary or ternary absorbing subuniverse} of $\mathbf A$.

\textbf{Center.}
Suppose $\mathbf{A} = (A;w_{A})$ is a finite algebra with a special WNU operation.
$C\subseteq A$ is called a \emph{center}
if there exists an algebra $\mathbf{B} = (B;w_{B})$ with a special WNU operation of the same arity and
a subdirect subalgebra $(R;w_{R})$ of $\mathbf{A}\times\mathbf{B}$ such that
there is no binary absorbing subuniverse in $\mathbf{B}$ and
$C = \{a\in A\mid \forall b\in B\colon (a,b)\in R\}.$

\textbf{CSP instance.}
An instance of the constraint satisfaction problem
is called \emph{a CSP instance}.
Sometimes we use the same letter for a CSP instance and for the set of all constraints of this instance.
For a variable $z$ by $D_{z}$ we denote the domain of the variable $z$.

We say that $z_{1}-C_{1}-z_{2}-\dots - C_{l-1}-z_{l}$ is
\emph{a path} in $\Theta$ if $z_{i},z_{i+1}$ are in the scope of $C_{i}$ for every $i$.
We say that \emph{a path $z_{1}-C_{1}-z_{2}-\dots C_{l-1}-z_{l}$  connects $b$ and $c$}
if there exists $a_{i}\in D_{z_{i}}$ for every $i$
such that
$a_{1} = b$, $a_{l} = c$, and
the projection of $C_{i}$ onto $z_{i}, z_{i+1}$
contains the tuple $(a_{i},a_{i+1})$.

A CSP instance is called \emph{1-consistent} if every constraint of the instance is subdirect.
A CSP instance is called \emph{cycle-consistent} if
for every variable $z$ and $a\in D_{z}$
any path starting and ending with $z$ in $\Theta$ 
connects $a$ and $a$.
A CSP instance $\Theta$ is called \emph{linked}
if for every variable $z$ appearing in $\Theta$ and every $a,b\in D_{z}$
there exists a path starting and ending with $z$ in $\Theta$ that connects $a$ and $b$.

Suppose $\mathbf{X'}\subseteq\mathbf{X}$.
Then we can define a projection of $\Theta$ onto $\mathbf{X'}$,
that is a CSP instance where variables are elements of $\mathbf{X'}$ and constraints are projections of the constraints of $\Theta$ onto~$\mathbf{X'}$.
We say that an instance $\Theta$ is \emph{fragmented}
if the set of variables $\mathbf X$ can be divided into 2 nonempty disjoint sets $\mathbf{X_1}$ and
$\mathbf{X_2}$ such that
the constraint scope of any constraint of $\Theta$
either has variables only from $\mathbf{X_1}$, or only from $\mathbf{X_2}$.

A CSP instance $\Theta$ is called \emph{irreducible} if
any instance $\Theta'$ 
such that every constraint of $\Theta'$ is a projection of a constraint from $\Theta$ on some set of variables
is fragmented, linked, or its solution set is subdirect.

\textbf{Weaker constraints.}
We say that a constraint $((y_{1},\ldots,y_{t}),\rho_{1})$ is \emph{weaker than}
a constraint $((z_{1},\ldots,z_s),\rho_{2})$
if $\{y_{1},\ldots,y_{t}\}\subseteq \{z_{1},\ldots,z_s\}$,
$\rho_{2}(z_{1},\ldots,z_s)\rightarrow \rho_{1}(y_{1},\ldots,y_{t})$,
and,
additionally,
$\rho_{1}(y_{1},\ldots,y_{t})\not\rightarrow \rho_{2}(z_{1},\ldots,z_s)$
or $\{y_{1},\ldots,y_{t}\}\neq\{z_{1},\ldots,z_s\}$.
Suppose $((y_{1},\ldots,y_{t}),\rho)$ is a constraint and
$\rho'(y_{1},\ldots,y_{t}) = \exists z \;\rho(z,y_{2},\ldots,y_{t})\wedge \sigma(z,y_{1}),$
where $\sigma$ is a minimal congruence such that $\sigma\supsetneq\ConOne(\rho,1)$.
Then $((y_{1},\ldots,y_{t}),\rho')$ is called \emph{a congruence-weakened constraint}.

\textbf{Minimal linear reduction.}
Suppose the domain set of the instance $\Theta$ is $D = (D_{1},\ldots,D_{n})$.
The domain set $D^{(1)} = (D_{1}^{(1)},\ldots,D_{n}^{(1)})$ is called \emph{a minimal linear reduction} if
$D_{i}^{(1)}$ is an equivalence class of the minimal linear congruence of $D_{i}$ for every $i$.
The reduction $D^{(1)} = (D_{1}^{(1)},\ldots,D_{n}^{(1)})$ is called \emph{1-consistent}
if the instance obtained after the reduction of every domain is 1-consistent.

\textbf{Crucial instances.}
Let $D_{i}^{(1)}\subseteq D_{i}$ for every $i$.
A constraint $C$ of $\Theta$ is called \emph{crucial in $(D_{1}^{(1)},\ldots,D_{n}^{(1)})$}
if $\Theta$ has no solutions in $(D_{1}^{(1)},\ldots,D_{n}^{(1)})$ but
the replacement of $C\in\Theta$ by all
weaker constraints gives an instance with a solution in $(D_{1}^{(1)},\ldots,D_{n}^{(1)})$.
A CSP instance $\Theta$ is called \emph{crucial in $(D_{1}^{(1)},\ldots,D_{n}^{(1)})$} if
every constraint of $\Theta$ is crucial in $(D_{1}^{(1)},\ldots,D_{n}^{(1)})$.
To simplify, instead of ``crucial in $(D_{1},\ldots,D_{n})$'' we say
``crucial''


\section{Algorithm}\label{Algorithm}
In this section we present a modified algorithm from \cite{MyProofCSP}.
The main problem that does not allow to use the original algorithm
for infinite languages is in
Steps~3 and 11, where we replace a constraint by all weaker constraints.
If $\Gamma$ is infinite, we do not have a polynomial upper bound on the number of such replacements.
To fix this problem, instead of replacing a constraint by all weaker constraints
we replace it by all congruence-weakened constraints.
Moreover, to restrict the depth of the recursion we additionally transform our instance to
ensure that all the constraint relations
are essential relations with the parallelogram property.

We start with the new procedures, 
then we explain auxiliary procedures from \cite{MyProofCSP}, 
and finish with the modified main part of the algorithm.

We have made only the following modifications of the algorithm.
\begin{enumerate}
\item We added Step \ref{EssentialRepresentation} to work only with essential relations.
This property is important because, by Lemma~\ref{essentialSize}, any essential relation preserved by an idempotent WNU operation
has exponentially many tuples.

\item We added Step \ref{ParallelogramStep} to ensure that every relation has the parallelogram property.

\item We added Step \ref{IrreducibleCongruenceStep} to ensure that $\ConOne(\rho,1)$ is an irreducible congruence
for every constraint relation $\rho$.

\item We changed Step~\ref{FullSimplificationStep} (Step 3 in \cite{MyProofCSP}). Instead of
replacing every constraint by all weaker constraints we
replace it by all congruence-weakened constraints, and therefore
we increase the congruence $\ConOne(\rho,1)$ for every constraint relation $\rho$.

\item Similarly, we replaced Step 11 in \cite{MyProofCSP} by Steps~\ref{SimplificationCycle} and \ref{SimplificationCycle2}.
In Step \ref{SimplificationCycle}, instead of replacing a constraint by all weaker constraints we replace it by all congruence-weakened constraints.
In Step \ref{SimplificationCycle2}, we try to replace a constraint by all projections onto all variables but one appearing in the constraint.

\item Instead of considering a center we consider a ternary absorption, thus
we replace Steps 4 and 5 in \cite{MyProofCSP} by Step \ref{AbsorptionStep}.

\end{enumerate}


\subsection{New parts}
\textbf{Finding an appropriate projection.}
Suppose $\rho\subseteq A_{1}\times\dots\times A_{n}$,
$\alpha = (a_{1},\ldots,a_{n})\notin\rho$.
Here we explain how
to find a minimal subset $I\subseteq\{1,\ldots,n\}$
such that
$\proj_{I}(\alpha)\notin\proj_{I}(\rho)$.
Note that $\proj_{I}(\rho)$ is always an essential relation.


\begin{enumerate}
\item Put $I: = \varnothing$.
\item Put $k:=1$.
\item While $\proj_{\{1,\ldots,k\}\cup I}(\alpha)\in\proj_{\{1,\ldots,k\}\cup I}(\rho)$ do $k:=k+1$.
\item Put $I:= I\cup \{k\}$
\item If $\proj_{I}(\alpha)\in\proj_{I}(\rho)$,  go to step 2.
\end{enumerate}


\textbf{Essential Representation.}
Suppose $\rho\subseteq A_{1}\times\dots\times A_{n}$.
\emph{An essential representation of $\rho$} is
the following formula
$$\rho(x_{1},\ldots,x_{n}) = \delta_{1}(z_{1,1},\ldots,z_{1,n_{1}})\wedge \dots\wedge\delta_{s}(z_{s,1},\ldots,z_{s,n_{s}}),$$
where $\delta_{i}$ is an essential relation from $\Gamma$,
$z_{i,j}\in\{x_{1},\ldots,x_{n}\}$,
$z_{i,j}\neq z_{i,k}$ for every $i$ and $j\neq k$.
Below we explain how to find a set $G = \{I_{1},\ldots,I_{s}\}$, where $I_{i}\subseteq \{1,\ldots,n\}$ for every $i$,
such that $\proj_{I_{1}}(\rho),\ldots,\proj_{I_{s}}(\rho)$ form an essential representation of $\rho$.
To guarantee this property we require that
every $\proj_{I_{i}}(\rho)$ is essential and
for every $\alpha\in (A_{1}\times \dots\times A_{n})\setminus \rho$
there exists $i$ such that $\alpha\notin\proj_{I_{i}}(\rho)$.

\begin{enumerate}
\item Put $G:= \varnothing$.
\item Choose a tuple $\alpha=(a_{1},\ldots,a_{n})\notin\rho$ such that
$(a_{1},\ldots,a_{n-1},b_{n})\in\rho$ for some $b_{n}$ (we have at most $|\rho|\cdot|A_{n}|$ such tuples).

\item Find a minimal subset $I\subseteq\{1,\ldots,n\}$
such that $\proj_{I}(\alpha)\notin\proj_{I}(\rho)$. Put $G := G\cup\{I\}$.

\item Go to the next tuple in 2).

\item Put $\rho' = \proj_{\{1,\ldots,n-1\}}(\rho)$.
\item By recursive call we calculate the essential representation $G'$ corresponding to $\rho'$.
\item Put $G:=G'\cup G$. 
\item Remove from $G$ all sets that are not maximal in $G$ by inclusion.
\end{enumerate}

Note that the obtained essential representation of an essential relation consists of the original relation.
Therefore, the above procedure can also be used to check whether a relation is essential.

\textbf{Providing the Parallelogram Property.}
In this section we explain how to find the minimal relation $\rho'\supseteq \rho$ having the parallelogram property.
We say that tuples $\alpha_{1},\alpha_{2},\alpha_{3},\alpha_{4}$ \emph{form a rectangle}
if there exists
$I\subseteq \{1,\ldots,n\}$ such that
$\proj_{I}(\alpha_{1}) = \proj_{I}(\alpha_{2})$,
$\proj_{I}(\alpha_{3}) = \proj_{I}(\alpha_{4})$,
$\proj_{\{1,\ldots,n\}\setminus I}(\alpha_{1}) = \proj_{\{1,\ldots,n\}\setminus I}(\alpha_{3})$,
$\proj_{\{1,\ldots,n\}\setminus I}(\alpha_{2}) = \proj_{\{1,\ldots,n\}\setminus I}(\alpha_{4})$.
To find the minimal relation with the parallelogram property
it is sufficient to close the relation under adding the forth tuple of a rectangle.
This can be done in the following way.

For each $\alpha_{1},\alpha_{2},\alpha_{3}\in \rho$.
\begin{enumerate}
\item Put $I_{1}: = \{i\mid \alpha_{1}(i) = \alpha_{2}(i)\}$,
$I_{2}: = \{i\mid \alpha_{1}(i) = \alpha_{3}(i)\}$.
\item If $I_{1}\cup I_{2} = \{1,\ldots, n\}$ then find the forth tuple $\alpha_{4}$ of the corresponding rectangle.
\item If $\alpha_{4}\notin\rho$, add $\alpha_{4}$ to $\rho$.
\end{enumerate}

By Lemma~\ref{essentialSize}, every essential relation has exponentially many tuples, therefore,
this procedure works in polynomial time on the size of $\rho$ if $\rho$ is an essential.

\subsection{Cycle-consistency, non-linked instances, and irreducibility}\label{AlgorithmRemaining}

\textbf{Provide cycle-consistency.}
To provide cycle-consistency it is sufficient to use constraint propagation providing (2,3)-consistency.
Formally, it can be done in the following way.
First, for every pair of variables $(x_{i},x_{j})$ we consider the
intersections of projections of all constraints onto these variables.
The corresponding relation we denote by $\rho_{i,j}$.
For every $i,j,k\in\{1,2,\ldots,n\}$
we replace
$\rho_{i,j}$ by $\rho_{i,j}'$
where $\rho_{i,j}'(x,y) = \exists  z \; \rho_{i,j}(x,y)\wedge \rho_{i,k}(x,z)\wedge \rho_{k,j}(z,y).$
It is not hard to see that this replacement does not change the solution set.

We repeat this procedure while we can change some $\rho_{i,j}$.
If at some moment we get a relation $\rho_{i,j}$ that is not subdirect in $D_{i}\times D_{j}$,
then we can either reduce $D_{i}$ or $D_{j}$, or,
 if $\rho_{i,j}$ is empty, state that there are no solutions.
If we cannot change any relation $\rho_{i,j}$ and every $\rho_{i,j}$ is subdirect in
$D_{i}\times D_{j}$, then the original CSP instance is cycle-consistent.

\textbf{Solve the instance that is not linked.}
Suppose the instance $\Theta$ is not linked and not fragmented, then it can be solved in the following way.
We say that an element $d_{i}\in D_{i}$
and an element $d_{j}\in D_{j}$ are \emph{linked}
if there exists a path that connects $d_{i}$ and $d_{j}$.
Let $P$ be the set of pairs $(i;a)$
such that $i\in\{1,2,\ldots,n\}$, $a\in D_{i}$.
Then $P$ can be divided into the linked components.

It is easy to see that it is sufficient to solve
the problem for
every linked component and join the results.
Precisely, for a linked component
by
$D_{i}'$ we denote the set of all elements $d$
such that
$(i,d)$ is in the component.
It is easy to see that $\varnothing\subsetneq D_{i}'\subsetneq D_{i}$ for every $i$.
Therefore, the reduction to $(D_{1}',\ldots,D_{n}')$ is a CSP instance on smaller domains.

\textbf{Check irreducibility.}
For every $k\in\{1,2,\ldots,n\}$ and every maximal
congruence $\sigma_{k}$ on $D_{k}$ we do the following.
\begin{enumerate}
\item Put $I = \{k\}$.
\item Choose a constraint $C$ having the variable $x_{i}$ in the scope for some $i\in I$, choose another variable $x_{j}$ from
the scope such that $j\notin I$.
\item Denote the projection of $C$ onto $(x_{i},x_{j})$ by $\delta$.
\item Put $\sigma_{j}(x,y) = \exists x'\exists y' \delta(x',x)\wedge \delta(y',y)\wedge \sigma_{i}(x',y')$.
If $\sigma_{j}$ is a proper equivalence relation,
then add $j$ to $I$.
\item go to the next $C$, $x_{i}$, and $x_{j}$ in 2.
\end{enumerate}
As a result we get a set $I$ and
a congruence $\sigma_{i}$ on $D_{i}$ for every $i\in I$.
Put $\mathbf{X'} = \{x_{i}\mid i\in I\}$.
It follows from the construction
that for every equivalence class $E_{k}$ of $\sigma_{k}$
and every $i\in I$
there exists a unique equivalence class $E_{i}$ of $\sigma_{i}$
such that there can be a solution with $x_{k}\in E_{k}$
and $x_{i}\in E_{i}$.
Thus, for every equivalence class of $\sigma_{k}$
we have a reduction to the instance on smaller domains.
Then for every $i$ and $a\in E_{i}$ we consider the corresponding reduction
and check whether there exists a solution with $x_{i} = a$.

Thus, we can check whether the solution set of the projection of the instance onto $\mathbf{X'}$ is subdirect or empty.
If it is empty then we state that there are no solutions.
If it is not subdirect, then we can reduce the corresponding domain.
If it is subdirect, then we
go to the next $k\in\{1,2,\ldots,n\}$ and next maximal
congruence $\sigma_{k}$ on $D_{k}$, and repeat the procedure.

\subsection{Main part}\label{AlgorithmMainPart}

In this section we provide an algorithm that solves $\CSP(\Gamma)$ in polynomial time for constraint languages $\Gamma$ (finite or infinite) that are preserved by an idempotent WNU operation.
We know that $\Gamma$ is also preserved by a special WNU operation $w$.
We extend $\Gamma$ to the set of all relations preserved by $w$.
Let the arity of the WNU $w$ be equal to $m$.
Suppose we have a CSP instance $\Theta = \langle \mathbf{X} , \mathbf{D} , \mathbf{C} \rangle$,
where
$\mathbf{X}=\{x_1,\ldots ,x_n\}$ is a set of variables,
$\mathbf{D}=\{D_{1},\ldots ,D_{n}\}$ is a set of the respective domains,
$\mathbf{C}=\{C_{1},\ldots ,C_{q}\}$ is a set of constraints.

The algorithm is recursive, the list of all possible recursive calls is given in the end of this subsection.
One of the recursive calls is the reduction of a subuniverse $D_{i}$ to $D_{i}'$
such that either $\Theta$ has a solution with $x_{i}\in D_{i}'$,
or it has no solutions at all.


\begin{step}\label{CycleConsistencyStep}
Check whether $\Theta$ is cycle-consistent. If not then we reduce a domain $D_{i}$ for some $i$
or state that there are no solutions.
\end{step}


\begin{step}\label{IrreducibilityStep}
Check whether $\Theta$ is irreducible.  If not then we reduce a domain $D_{i}$ for some $i$
or state that there are no solutions.
\end{step}

\begin{step}\label{EssentialRepresentation}
Replace every constraint by its essential representation.
\end{step}

By Theorem~\ref{ParPropertyForAll}, if $\Theta$ has no solutions then we cannot get a solution while doing the following step.

\begin{step}\label{ParallelogramStep}
Replace every constraint relation by the corresponding constraint relation having the parallelogram property.
If one of the obtained constraint relation is not essential, go to Step~\ref{EssentialRepresentation}.
\end{step}

By Lemma \ref{notIrreducibleImplies}, if $\Theta$ has no solutions then we cannot get a solution in the following step.

\begin{step}\label{IrreducibleCongruenceStep}
If the congruence $\ConOne(\rho,1)$ is not irreducible for some constraint relation $\rho$, then replace the constraint
by the corresponding congruence-weakened constraints and go to Step \ref{EssentialRepresentation}.
\end{step}

At the moment all constraint relations $\rho$ have the parallelogram property and
$\ConOne(\rho,1)$ is an irreducible congruence,
therefore for every constraint there exists a unique congruence-weakened constraint.

\begin{step}\label{FullSimplificationStep}
Replace every constraint of $\Theta$ by the corresponding congruence-weakened constraint,
then replace every constraint relation by the corresponding constraint relation having the parallelogram property.
Recursively calling the algorithm, check that the obtained instance has a solution
with $x_{i}=b$ for every $i\in\{1,2,\ldots,n\}$ and $b\in D_{i}$.
If not, reduce $D_{i}$ to the projection onto $x_{i}$ of the solution set of the obtained instance.
\end{step}


By Theorems \ref{AbsorptionStepThm} and \ref{ternaryAbsorptionStep} we cannot loose the only solution while doing the following step.
\begin{step}\label{AbsorptionStep}
If $D_{i}$ has a binary or ternary absorbing subuniverse $B_{i}\subsetneq D_{i}$ for some $i$, then we reduce $D_{i}$ to $B_{i}$.
\end{step}


By Theorem~\ref{PCStepThm} we can do the following step.
\begin{step}\label{PCStep}
If there exists a congruence $\sigma$ on $D_{i}$ such that
the algebra $(D_{i};w)/\sigma$ is polynomially complete, then we reduce $D_{i}$ to
any equivalence class of $\sigma$.
\end{step}

By Theorem~\ref{NextReduction},
it remains to consider the case when
for every domain $D_{i}$
there exists a congruence $\sigma_{i}$ on $D_{i}$
such that
$(D_{i};w)/\sigma_{i}$ is linear, i.e. it is isomorphic to
$(\mathbb Z_{p_{1}}\times \dots \times \mathbb Z_{p_{l}};x_{1}+\dots+x_{m})$
for prime numbers $p_{1},\ldots,p_{l}$.
Moreover, $\sigma_{i}$ is proper if $|D_{i}|>1$.

We denote $D_{i}/\sigma_{i}$ by $L_{i}$.
We define a new CSP instance $\Theta_{L}$ with domains $L_{1},\ldots,L_{n}$.
To every constraint $((x_{i_1},\ldots,x_{i_s}),\rho)\in \Theta$
we assign a constraint
$((x_{i_1}',\ldots,x_{i_s}'),\rho')$,
where $\rho'\subseteq L_{i_{1}}\times\dots\times L_{i_{s}}$
and $(E_{1},\ldots,E_{s})\in\rho'\Leftrightarrow
(E_{1}\times\dots\times E_{s})\cap\rho\neq\varnothing.$
The constraints of $\Theta_{L}$ are all constraints that are assigned to the constraints of $\Theta$.

Since every relation on $\mathbb Z_{p_{1}}\times \dots \times \mathbb Z_{p_{l}}$ preserved by $x_{1}+\ldots+x_{m}$ is known to be
a conjunction of linear equations,
the instance $\Theta_{L}$ can be viewed as a system of linear equations
in $\mathbb Z_{p}$ for different $p$.

Our general idea is to add some linear equations to $\Theta_{L}$ so that
for any solution of $\Theta_{L}$ there exists the corresponding solution of $\Theta$.
We start with the empty set of equations $Eq$, which is
a set of constraints on $L_{1},\ldots,L_{n}$.

\begin{step}
Put $Eq:=\varnothing$.
\end{step}

\begin{step}\label{AddingEquationCycle}
Solve the system of linear equations $\Theta_{L}\cup Eq$
and choose independent variables
$y_{1},\ldots,y_{k}$.
If it has no solutions then $\Theta$ has no solutions.
If it has just one solution, then, recursively calling the algorithm, solve the reduction of $\Theta$ to this solution.
Either we get a solution of $\Theta$, or $\Theta$ has no solutions.
\end{step}

Then there exist
${Z} = \mathbb Z_{q_{1}}\times \dots \times \mathbb Z_{q_{k}}$ and
a linear mapping $\phi\colon Z \to L_{1}\times\dots\times L_{n}$
such that any solution of $\Theta_{L}\cup Eq$ can be obtained
as $\phi(a_{1},\ldots,a_{k})$ for some $(a_{1},\ldots,a_{k})\in {Z}$.

Note that for any tuple $(a_{1},\ldots,a_{k})\in {Z}$
we can check recursively whether $\Theta$ has a solution
in $\phi(a_{1},\ldots,a_{k})$.
To do this, we just need to solve an easier CSP instance (on smaller domains).
Similarly, we can check whether $\Theta$ has a solution in
$\phi(a_{1},\ldots,a_{k})$
for every $(a_{1},\ldots,a_{k})\in \mathbb {Z}$.
To do this, we just need to check the
existence of a solution in $\phi(0,\ldots,0,1,0,\ldots,0)$ and
$\phi(0,\ldots,0)$  for any position of $1$.

\begin{step}\label{SolveZeros}
Check whether $\Theta$ has a solution in $\phi(0,\ldots,0)$. If it has then stop the algorithm.
\end{step}


\begin{step}
Put $\Theta':= \Theta$. Iteratively remove from $\Theta'$ all
constraints that are weaker than some other constraints of $\Theta'$.
\end{step}

In the following two steps we try to weaken the instance so that it still does not have a solution
in $\phi(a_{1},\ldots,a_{k})$
for some $(a_{1},\ldots,a_{k})\in {Z}$.

\begin{step}\label{SimplificationCycle}
For every constraint $C\in\Theta'$
\begin{enumerate}
\item Let $\Omega$ be obtained from $\Theta'$ by replacing the constraint $C\in\Theta'$ by the corresponding congruence-weakened constraints.
\item Replace every new constraint of $\Omega$ by its essential representation
and remove from $\Omega$ all constraints that are weaker than some other constraints of $\Omega$.
\item If $\Omega$ has no solutions in
$\phi(a_{1},\ldots,a_{k})$
for some $(a_{1},\ldots,a_{k})\in {Z}$,
then put $\Theta':=\Omega$ and repeat Step~\ref{SimplificationCycle}.
\end{enumerate}
\end{step}

\begin{step}\label{SimplificationCycle2}
For every constraint $C\in\Theta'$ 
\begin{enumerate}
\item Let $\Omega$ be obtained from $\Theta'$ by replacing the constraint $C\in\Theta'$ by its projections onto all variables but one appearing in $C$.
\item Replace the new constraints by its essential representation and remove from $\Omega$ all constraints that are weaker than some other constraints of $\Omega$.
\item If $\Omega$ has no solutions in
$\phi(a_{1},\ldots,a_{k})$
for some $(a_{1},\ldots,a_{k})\in {Z}$,
then put $\Theta':=\Omega$ and go to Step~\ref{SimplificationCycle}.
\end{enumerate}
\end{step}


At this moment, the CSP instance $\Theta'$ has the following property.
$\Theta'$ has no solutions in $\phi(b_{1},\ldots,b_{k})$
for some $(b_{1},\ldots,b_{k})\in {Z}$
but if we replace any constraint $C\in\Theta'$ by the corresponding congruence-weakened constraints
then we get an instance that has a solution
in $\phi(a_{1},\ldots,a_{k})$
for every $(a_{1},\ldots,a_{k})\in {Z}$.
Unlike the original algorithm, we cannot claim that $\Theta'$ is crucial in $\phi(b_{1},\ldots,b_{k})$.
Nevertheless, Theorem \ref{addingOneVariable} proves that
we can finish the algorithm in the same way as in the original paper.

In the remaining steps we will find a new linear equation that can be added to $\Theta_{L}$.
Suppose $V$ is an affine subspace of $\mathbb Z_{p}^{h}$ of dimension $h-1$, thus $V$ is the solution set of a linear equation
$c_1x_1 + \dots + c_h x_h = c_{0}$. Then the coefficients $c_0,c_{1},\dots,c_{h}$ can be learned (up to a multiplicative constant) by $(p\cdot h+1)$ queries of the form ``$(a_1,\ldots,a_h) \in V$?'' as follows.
First, we need at most $(h+1)$ queries to find a tuple $(d_{1},\ldots,d_{h})\notin V$. 
Then, to find this equation it is sufficient to check for every $a$ and every $i$
whether the tuple $(d_{1},\ldots,d_{i-1},a,d_{i+1},\ldots,d_{h})$ satisfies this equation.

\begin{step}\label{nonLinkedStep}
Suppose $\Theta'$ is not linked. For each $i$ from $1$ to $k$
\begin{enumerate}
\item Check that for every $(a_{1},\ldots,a_{i})\in \mathbb Z_{q_{1}}\times \dots \times \mathbb Z_{q_{i}}$
there exist $(a_{i+1},\ldots,a_{k})\in \mathbb Z_{q_{i+1}}\times \dots \times \mathbb Z_{q_{k}}$ and  a solution of $\Theta'$
in $\phi(a_{1},\ldots,a_{k})$.
\item If yes, go to the next $i$.
\item If no, then find an equation
$c_{1}y_{1}+\dots+c_{i}y_{i}=c_{0}$ such that
for every $(a_{1},\ldots,a_{i})\in
\mathbb Z_{q_{1}}\times \dots \times \mathbb Z_{q_{i}}$
satisfying $c_{1}a_{1}+\dots+c_{i}a_{i}=c_{0}$
there exist $(a_{i+1},\ldots,a_{k})\in \mathbb Z_{q_{i+1}}\times \dots \times \mathbb Z_{q_{k}}$ and  a solution of $\Theta'$
in $\phi(a_{1},\ldots,a_{k})$.
\item Add the equation $c_{1}y_{1}+\dots+c_{i}y_{i}=c_{0}$ to Eq.
\item Go to Step~\ref{AddingEquationCycle}.
\end{enumerate}
\end{step}


It is not hard to see that $\Theta'$ satisfies the conditions of Theorem~\ref{addingOneVariable}.
Then there exists a constraint $((x_{i_1},\ldots,x_{i_s}),\rho)$ in $\Theta'$ and a relation
$\xi\subseteq D_{i_{1}}\times D_{i_{1}}\times \mathbb Z_{p}$
such that
$(x_{1},x_{2},0)\in \xi\Leftrightarrow (x_{1},x_{2})\in\ConOne(\rho,1)$, $\proj_{1,2}(\xi)\supsetneq\ConOne(\rho,1)$,
and $\ConOne(\rho,1)$ is a congruence.
We add a new variable $z$ with domain $\mathbb Z_{p}$ and a variable $x_{i_{1}}'$ with the same domain as $x_{i_{1}}$.
Then we replace $((x_{i_1},\ldots,x_{i_s}),\rho)$ by
$((x_{i_1}',x_{i_2},\ldots,x_{i_s}),\rho)$ and
add the constraint $((x_{i_{1}},x_{i_{1}}',z),\xi)$.
We denote the obtained instance by $\Upsilon$.
Let $L$ be the set of all tuples $(a_{1},\ldots,a_{k},b)\in
\mathbb Z_{q_{1}}\times \dots \times \mathbb Z_{q_{k}} \times \mathbb Z_{p}$
such that $\Upsilon$ has a solution with $z=b$ in $\phi(a_{1},\ldots,a_{k})$.
By Theorem \ref{ParPropertyMain}, if we replace the constraint relation $\rho$ in $\Theta'$ by
the minimal relation $\rho'\supseteq \rho$ having the parallelogram property
then a minimal linear reduction cannot get a solution after the replacement.
Therefore, $L$ has no tuple $(b_{1},\ldots,b_{k},0)$.
Similarly, if we replace $\rho$ in $\Theta'$ by
$\rho''$ defined by
$$\rho''(x_{i_1},x_{i_2},\ldots,x_{i_s})
= \exists x_{i_1}'\;\rho(x_{i_1}',x_{i_2},\ldots,x_{i_s})\wedge \sigma(x_{i_1},x_{i_1}'),$$
where $\sigma = \proj_{1,2}(\xi)$,
then we get a solution in $\phi(b_{1},\ldots,b_{k})$.
Otherwise, Theorem \ref{ParPropertyMain} implies that
the replacement of $\rho''$ by the minimal relation $\rho'''\supseteq \rho''$ having the parallelogram property
still does not give a solution in $\phi(b_{1},\ldots,b_{k})$.
This contradicts the fact that if we replace any constraint $C\in\Theta'$ by the corresponding congruence-weakened constraints
then we get an instance that has a solution in $\phi(a_{1},\ldots,a_{k})$ for every $(a_{1},\ldots,a_{k})\in {Z}$.
Thus, we know that the projection of $L$ onto the first $k$ coordinates is a full relation.

Therefore, $L$ is defined by one linear equation.
If this equation is $z = b$ for some $b\neq 0$, then
both $\Theta'$ and $\Theta$ have no solutions.
Otherwise, we put $z=0$ in this equation and get
an equation that describes all $(a_{1},\ldots,a_{k})$ such that
$\Theta'$ has a solution in $\phi(a_{1},\ldots,a_{k})$.
It remains to find this equation.


\begin{step}\label{LinkedStep}
Suppose $\Theta'$ is linked.
\begin{enumerate}
\item Find
an equation
$c_{1}y_{1}+\dots+c_{k}y_{k} = c_{0}$
such that
for every $(a_{1},\ldots,a_{k})\in
(\mathbb Z_{q_{1}}\times \dots \times \mathbb Z_{q_{k}})$
satisfying $c_{1}a_{1}+\dots+c_{k}a_{k}=c_{0}$
there exists  a solution of $\Theta'$
in $\phi(a_{1},\ldots,a_{k})$.
\item If the equation was not found then $\Theta$ has no solutions.
\item Add the equation $c_{1}a_{1}+\dots+c_{k}a_{k}=c_{0}$ to Eq.
\item Go to Step~\ref{AddingEquationCycle}.
\end{enumerate}
\end{step}

Note that every time we reduce our domains, we get constraint relations that are still from~$\Gamma$.

We have four types of recursive calls of the algorithm:
\begin{enumerate}
\item we reduce one domain $D_{i}$, for example to a binary absorbing subuniverse
(Steps \ref{CycleConsistencyStep}, \ref{AbsorptionStep}, \ref{PCStep}). 
\item we solve an instance that is not linked. In this case we divide the instance into the linked parts
and solve each of them independently
(Steps \ref{IrreducibilityStep}, \ref{nonLinkedStep}).
\item we replace every constraint by the corresponding congruence-weakened constraint and solve an easier CSP instance
(Step \ref{FullSimplificationStep}).
\item we reduce every domain $D_{i}$ such that
$|D_{i}|>1$
(Steps \ref{AddingEquationCycle}, \ref{SolveZeros}, \ref{SimplificationCycle}, \ref{SimplificationCycle2}, \ref{LinkedStep}).
\end{enumerate}

Lemma~\ref{RecursionTwoDepth} states that 
the depth of the recursive calls of type 3 is at most $2^{|A|^{2}}$.
It is easy to see that the depth of the recursive calls of type 2 and 4 is at most $|A|$.

\section{Correctness of the Algorithm}\label{CorretnessSection}
\subsection{The size of an essential relation and depth of the recursion}

\begin{lem}\label{essentialSize}
Suppose $\rho\subseteq A^{n}$ is an essential relation preserved by a special WNU $w$ of arity $m<n.$
Then $|\rho|\ge 2^{n-m+1}$
\end{lem}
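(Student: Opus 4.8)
The starting point is Lemma~\ref{sushnabor}: an essential relation has an essential tuple. So I would fix an essential tuple $\alpha=(a_1,\dots,a_n)\notin\rho$ together with tuples $\beta^{(1)},\dots,\beta^{(n)}\in\rho$ such that $\beta^{(i)}$ agrees with $\alpha$ in every coordinate except the $i$-th, where it carries some value $b_i$; note $b_i\neq a_i$, since $\beta^{(i)}\in\rho$ while $\alpha\notin\rho$. Write $x\circ y:=w(x,\dots,x,y)$ and set $c_i:=a_i\circ b_i$. Since $w$ is special, $a_i\circ c_i=a_i\circ(a_i\circ b_i)=a_i\circ b_i=c_i$, so $c_i$ is a fixed point of $a_i\circ(-)$.

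For $S\subseteq\{1,\dots,n\}$ let $\gamma_S$ be the tuple whose $i$-th entry is $c_i$ if $i\in S$ and $a_i$ if $i\notin S$. The core claim is that $\gamma_S\in\rho$ whenever $|S|\ge m$ and $|S|\equiv 1\pmod{m-1}$, proved by induction on $|S|$. For $|S|=m$ apply $w$ coordinatewise to the $m$ tuples $\{\beta^{(i)}:i\in S\}$: at a coordinate $i\in S$ exactly one argument (namely $\beta^{(i)}$) differs from $a_i$ — it carries $b_i$ — so by the WNU identities the output there is $a_i\circ b_i=c_i$, while at every coordinate $i\notin S$ all arguments carry $a_i$. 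For larger $S$, the values $\equiv 1\pmod{m-1}$ that are $\ge m$ are $m,2m-1,3m-2,\dots$, so $|S|\ge 2m-1$; partition $S$ into $m-1$ singletons $\{i_1\},\dots,\{i_{m-1}\}$ and one block $T$ with $|T|=|S|-(m-1)\ge m$ and $|T|\equiv 1\pmod{m-1}$, and apply $w$ to $\beta^{(i_1)},\dots,\beta^{(i_{m-1})},\gamma_T$ (the latter in $\rho$ by induction). Again, at each coordinate of $S$ the $m$ arguments differ from the $a$-value in exactly one place, carrying $b_i$ for the singleton coordinates and $c_i$ for the coordinates of $T$; the WNU identities give output $a_i\circ b_i=c_i$ in the first case and $a_i\circ c_i=c_i$ in the second — this is the point at which the special identity is needed — and outside $S$ the output is the $a$-value. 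Hence $\gamma_S\in\rho$.

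Assume first that $c_i\neq a_i$ for every $i$. Then the tuples $\gamma_S$ produced above are pairwise distinct (they differ at the coordinates of $S\bigtriangleup S'$), and each is distinct from the $n$ tuples $\beta^{(i)}$, whose set of coordinates differing from $\alpha$ has size $1<m$. Therefore $|\rho|\ge n+\#\{S:|S|\ge m,\ |S|\equiv 1\!\!\pmod{m-1}\}$, and a direct estimate of this binomial sum finishes the case: its dominant term is $\tfrac{1}{m-1}2^{n}$, which for $m\ge 3$ comfortably exceeds $2^{n-m+1}=2^{-(m-1)}2^{n}$ (since $2^{m-1}>m-1$), the lower-order contributions being absorbed, while the small cases $m=2$ and $n=m+1$ are checked directly. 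If instead $c_i=a_i$ for some $i$, the value $b_i$ collapses under $\circ$ and the above construction is wasteful at coordinate $i$; here the plan is to fall back on induction on $n$, replacing $\rho$ by an essential relation on fewer coordinates — obtained by an appropriate projection or slice at coordinate $i$ — which is still preserved by $w$ and for which the (now weaker) bound follows from the induction hypothesis, and then transferring the bound back to $\rho$.

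The main obstacle is the bookkeeping in the inductive step: one must guarantee that at every coordinate of $S$ the $m$ arguments fed to $w$ differ from the $a$-value in at most one place, so that only the WNU identities together with the single special identity $a\circ(a\circ b)=a\circ b$ are available (there is no usable identity for $w$ applied to two or more "deviating" coordinates, nor for $c\circ a$ in the wrong order), and this is precisely what forces the restriction $|S|\equiv 1\pmod{m-1}$. The second delicate point is the degenerate case $c_i=a_i$, where the reduction to fewer coordinates has to be set up so that essentiality of the relation and the presence of the WNU are both preserved.
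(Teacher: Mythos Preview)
Your inductive construction of the tuples $\gamma_S\in\rho$ for $|S|\ge m$ with $|S|\equiv 1\pmod{m-1}$ is correct, and it is essentially the mechanism the paper uses as well. The genuine gap is your treatment of the degenerate case $c_i=a_i$. The plan to ``fall back on induction on $n$'' does not deliver the stated bound: slicing $\rho$ at $x_i=a_i$ gives an essential relation of arity $n-1$ (same essential tuple with coordinate $i$ deleted), and the inductive hypothesis only yields $2^{(n-1)-m+1}=2^{n-m}$, a factor of $2$ short; projecting away coordinate $i$ is worse, since $(a_1,\dots,\widehat{a_i},\dots,a_n)$ lies in the projection (it is the image of $\beta^{(i)}$) and is therefore no longer an essential tuple. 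Without control on how many coordinates are bad, this deficit compounds, and your sketch gives no way to ``transfer the bound back''.

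The missing idea is that \emph{at most $m-1$ coordinates are bad}: if $c_{i_1}=a_{i_1},\dots,c_{i_m}=a_{i_m}$ then $w(\beta^{(i_1)},\dots,\beta^{(i_m)})=\alpha\in\rho$, a contradiction. With this in hand the paper does something cleaner than your non-degenerate count. Reorder so the bad coordinates lie in $\{1,\dots,m-1\}$ and pass to the \emph{projection} $\rho'$ of $\rho$ onto coordinates $m,\dots,n$. The crucial gain is that in $\rho'$ the all-$a$ tuple $\alpha_\varnothing=(a_m,\dots,a_n)$ is present (it is the image of any $\beta^{(j)}$ with $j<m$), and so are the singletons $\alpha_{\{i\}}$. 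The special identity then lets you combine freely, $w(\alpha_{I_1},\alpha_{I_2},\alpha_\varnothing,\dots,\alpha_\varnothing)=\alpha_{I_1\cup I_2}$ for disjoint $I_1,I_2$, so \emph{every} subset of $\{m,\dots,n\}$ occurs; since all these coordinates are good, the $2^{n-m+1}$ tuples are pairwise distinct. Your congruence restriction $|S|\equiv 1\pmod{m-1}$ and the ensuing binomial estimate are precisely the price of working inside $\rho$, where $\alpha_\varnothing$ is unavailable; moving to the projection removes both the restriction and the need for a separate degenerate case.
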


\begin{proof}
By Lemma~\ref{sushnabor}, there exists an essential tuple $(a_{1},a_{2},\ldots,a_{n})$ for $\rho$.
For every $i$ choose $b_{i}$ such that
$$(a_{1},\ldots,a_{i-1},b_{i},a_{i+1},\ldots,a_{n})\in\rho.$$
Put $b_{i}' = w(b_{i},a_{i},\ldots,a_{i})$ for every $i$.
Without loss of generality assume that
$b_{i}' = a_{i}$ for every $i\le k$
and $b_{i}' \neq a_{i}$ for every $i>k$.
Since $w$ preserves $\rho$,
we can show that
$(b_{1}',\ldots,b_{m}',a_{m+1},\ldots,a_{n})\in\rho$.
Hence, $k<m$.
Consider the projection of $\rho$ onto the last $n-m+1$ variables, which we denote by $\rho'$.
It is not hard to check that
$\alpha = (a_{m},\ldots,a_{n})\in\rho'$
and
$(a_{m},\ldots,a_{i-1},b_{i}',a_{i+1}\ldots,a_{n})\in\rho'$
for every $i\in\{m,\ldots,n\}$.

For any subset $I\subseteq \{m,\ldots,n\}$
put
$\alpha_{I} = (c_{m},\ldots,c_{n})\in\rho$,
where $c_{i} = b_{i}'$ if $i\in I$ and
$c_{i} = a_{i}$ otherwise.
We know that $\alpha_{\{i\}}\in\rho'$ for every $i\in\{m,\ldots,n\}$.
Since $w$ is a special WNU,
$w(b_{i}',a_{i},\ldots,a_{i}) = b_{i}'$.
Then we can check that
for any disjoint subsets $I_{1},I_{2}\subseteq \{m,\ldots,n\}$
we have $w(\alpha_{I_{1}},\alpha_{I_2},\alpha,\ldots,\alpha) = \alpha_{I_{1}\cup I_{2}}$.
Thus, $\rho'$ contains the tuple $\alpha_{I}$ for any $I\subseteq \{m,\ldots,n\}$.
Therefore, both $\rho$ and $\rho'$ contain at least $2^{n-m+1}$ tuples.
\end{proof}

\begin{lem}\label{IncreaseAllCon}
Suppose $\rho$ is an essential relation with the parallelogram property,
$\ConOne(\rho,1)$ is an irreducible congruence,
$\sigma = \cover{\ConOne(\rho,1)}$,
$\rho'(y_{1},\ldots,y_{t}) = \exists z \;\rho(z,y_{2},\ldots,y_{t})\wedge \sigma(z,y_{1}).$
Then $\ConOne(\rho',i)\supsetneq \ConOne(\rho,i)$ for every $i$.
\end{lem}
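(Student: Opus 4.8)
The plan is to fix an index $i$ and show two things: first, that $\ConOne(\rho',i)\supseteq\ConOne(\rho,i)$, i.e.\ the congruence cannot shrink; and second, that the containment is strict. For $i=1$ the first inclusion is immediate from the definition of $\rho'$, since $\rho'(y_1,\dots,y_t)=\exists z\,\rho(z,y_2,\dots,y_t)\wedge\sigma(z,y_1)$ with $\sigma\supseteq\ConOne(\rho,1)$, so any pair witnessed by $\rho$ is witnessed by $\rho'$; and strictness at $i=1$ should come from the fact that $\ConOne(\rho',1)\supseteq\sigma\supsetneq\ConOne(\rho,1)$, where the first containment holds because for $y_1,y_1'$ with $(y_1,y_1')\in\sigma$ we can reach both from a common $z$. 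For $i\ge 2$, the inclusion $\ConOne(\rho',i)\supseteq\ConOne(\rho,i)$ again follows directly: if $(a_i,b_i)\in\ConOne(\rho,i)$, witnessed by two tuples of $\rho$, then projecting through the $\sigma$-step and using reflexivity of $\sigma$ on the first coordinate keeps both tuples in $\rho'$, so $(a_i,b_i)\in\ConOne(\rho',i)$.

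**The strictness for $i\ge 2$.** This is the heart of the matter. I would argue as follows. Because $\rho$ is essential, pick an essential tuple, and because $\rho$ has the parallelogram property it is rectangular, so $\ConOne(\rho,1)$ is a congruence and the first variable is rectangular. Since $\ConOne(\rho,1)$ is \emph{irreducible} but $\sigma=\cover{\ConOne(\rho,1)}\supsetneq\ConOne(\rho,1)$, there exist $a,a'$ with $(a,a')\in\sigma\setminus\ConOne(\rho,1)$. Choose a tuple $(a,u_2,\dots,u_t)\in\rho$; by definition of $\rho'$ we get $(a',u_2,\dots,u_t)\in\rho'$ (using $(a',a)\in\sigma$), while $(a',u_2,\dots,u_t)\notin\rho$ — indeed if it were in $\rho$ then $(a,a')\in\ConOne(\rho,1)$ by definition of $\ConOne(\cdot,1)$, a contradiction. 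So $\rho'$ is strictly larger than $\rho$, and in fact it is larger "in the first coordinate along $\sigma$." Now I need to convert this genuine enlargement into an enlargement of $\ConOne(\rho',i)$ for each $i\ge 2$. The idea is: since $\rho$ has no dummy variables and is essential, every coordinate $i$ genuinely interacts with coordinate $1$; combined with the parallelogram/rectangularity property one can propagate the new pair. Concretely, take the essential tuple $(a_1,\dots,a_t)$ for $\rho$, let $b_i$ be such that $(a_1,\dots,a_{i-1},b_i,a_{i+1},\dots,a_t)\in\rho$, and track how the first-coordinate enlargement forces, via the parallelogram property applied to suitable triples of tuples of $\rho'$, a pair $(c_i,c_i')\in\ConOne(\rho',i)$ that is not in $\ConOne(\rho,i)$; the parallelogram closure of $\rho'$ is what lets the "new" slack introduced at coordinate $1$ slide over to coordinate $i$.

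**The main obstacle.** The delicate point is precisely the last propagation step for $i\ge 2$: showing that enlarging $\rho$ only in the first coordinate (along $\sigma$) must strictly enlarge $\ConOne(\rho',i)$ for \emph{every} other coordinate, not just for the first. This is not automatic from essentiality alone; it genuinely uses that $\rho$ has the parallelogram property (so $\rho'$, as constructed, still relates nicely across coordinates) together with the irreducibility of $\ConOne(\rho,1)$, which guarantees $\sigma$ is a \emph{minimal} proper extension and hence that the new pairs are "as small as possible" yet still present. I expect the cleanest route is: first prove $\ConOne(\rho',1)=\sigma$ exactly (both inclusions), then show $\rho'$ is itself rectangular (it may not literally have the parallelogram property, but $\ConOne(\rho',1)=\sigma$ is a congruence and the first variable of $\rho'$ is rectangular), and finally use a connectedness/linkedness argument inside $\rho'$ — the fact that $\rho$ has no dummy variables means that fixing all but coordinates $1$ and $i$ gives a subdirect binary relation through which the enlargement at coordinate $1$ transfers to coordinate $i$. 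Carrying out this transfer carefully, using $\ConOne(\rho,1)\subsetneq\sigma$ and the rectangularity, is the one place where a genuine (though short) calculation is unavoidable; everything else is bookkeeping with the definitions.
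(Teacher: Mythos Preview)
Your plan correctly isolates the trivial inclusion $\ConOne(\rho',i)\supseteq\ConOne(\rho,i)$ and the strictness for $i=1$, and you correctly locate the only real difficulty: strictness for $i\ge 2$. But you do not actually prove that step; you only sketch a route (``prove $\ConOne(\rho',1)=\sigma$ exactly, show $\rho'$ is rectangular, then transfer via a subdirect binary link between coordinates $1$ and $i$'') and concede the calculation is ``unavoidable''. That route is shaky: the equality $\ConOne(\rho',1)=\sigma$ is neither proved nor needed (and $\sigma=\cover{\ConOne(\rho,1)}$ is only the minimal \emph{binary relation} above $\ConOne(\rho,1)$ compatible with it, not a priori a congruence, so the equality is doubtful), and the ``connectedness/linkedness'' transfer is left entirely vague. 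As written, this is a genuine gap.

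The paper's argument avoids $\rho'$ almost entirely and works inside $\rho$. For each $i$ define
\[
\sigma_i(y_1,y_1')\;=\;\exists y_i'\,\exists y_2\cdots\exists y_t\;\rho(y_1',y_2,\ldots,y_t)\wedge\rho(y_1,\ldots,y_{i-1},y_i',y_{i+1},\ldots,y_t),
\]
the relation ``two $\rho$-tuples agree everywhere except possibly at coordinates $1$ and $i$''. Essentiality of $\rho$ gives an essential tuple $(a_1,\ldots,a_t)$ and witnesses $b_1,b_i$, so $(a_1,b_1)\in\sigma_i$; rectangularity of the first variable (from the parallelogram property) forces $(a_1,b_1)\notin\ConOne(\rho,1)$, hence $\sigma_i\supsetneq\ConOne(\rho,1)$. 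Rectangularity also makes $\sigma_i$ compatible with $\ConOne(\rho,1)$, so irreducibility yields $\sigma_i\supseteq\sigma$. Now pick any $(a_1,a_1')\in\sigma\setminus\ConOne(\rho,1)\subseteq\sigma_i$ and read off witnesses $a_i',a_2,\ldots,a_t$ with $(a_1',a_2,\ldots,a_t)\in\rho$ and $(a_1,\ldots,a_{i-1},a_i',a_{i+1},\ldots,a_t)\in\rho$. Both $(a_1,a_2,\ldots,a_t)$ and $(a_1,\ldots,a_{i-1},a_i',a_{i+1},\ldots,a_t)$ lie in $\rho'$, giving $(a_i,a_i')\in\ConOne(\rho',i)$; and $(a_i,a_i')\notin\ConOne(\rho,i)$, since otherwise rectangularity of the $i$-th variable would put $(a_1,a_2,\ldots,a_t)\in\rho$ and hence $(a_1,a_1')\in\ConOne(\rho,1)$. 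The missing idea in your sketch is exactly this auxiliary relation $\sigma_i$: irreducibility is used not to analyse $\rho'$, but to force $\sigma\subseteq\sigma_i$ inside $\rho$.
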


\begin{proof}
Put
$\sigma_{i}(y_{1},y_{1}') =
\exists y_{i}'\exists y_{2}\dots \exists y_{t}\;
\rho(y_{1}',y_{2},\ldots,y_{t})
\wedge
\rho(y_{1},\ldots,y_{i-1},y_{i}',y_{i+1},\ldots,y_{t}).$
Since $\rho$ is an essential relation with the parallelogram property,
we have $\sigma_{i}\supsetneq\ConOne(\rho,1)$.
Since $\ConOne(\rho,1)$ is irreducible, $\sigma_{i}\supseteq\sigma$.
Therefore, for any $(a_{1},a_{1}')\in\sigma\setminus\ConOne(\rho,1)$
there exist $a_{i}',a_{2},a_{3},\ldots,a_{t}$
such that
$(a_{1}',a_{2},\ldots,a_{t}),
(a_{1},\ldots,a_{i-1},a_{i}',a_{i+1},\ldots,a_{t})\in\rho.$
Hence $(a_{i},a_{i}')\in\ConOne(\rho',i)\setminus\ConOne(\rho,i)$ and $\ConOne(\rho',i)\supsetneq \ConOne(\rho,i)$.
\end{proof}

\begin{lem}\label{RecursionTwoDepth}
The depth of the recursive calls of type 3 in the algorithm is less than $2^{|A|^{2}}$.
\end{lem}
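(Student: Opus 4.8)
The plan is to track how the congruence $\ConOne(\rho,1)$ evolves along a chain of type-3 recursive calls and argue that it must strictly increase at every level, so the chain cannot be longer than the longest possible strictly increasing sequence of congruences on the domains. Recall that a type-3 call is exactly Step~\ref{FullSimplificationStep}: we replace every constraint by its congruence-weakened version and then re-apply the parallelogram-property closure, obtaining an easier instance that we solve recursively. By the discussion just before Step~\ref{FullSimplificationStep}, at that point every constraint relation $\rho$ has the parallelogram property and $\ConOne(\rho,1)$ is irreducible, so the congruence-weakened constraint is unique and is given precisely by the formula in Lemma~\ref{IncreaseAllCon} with $\sigma = \cover{\ConOne(\rho,1)}$.

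First I would fix a constraint relation $\rho$ that survives (in some form) from one type-3 call to the next and apply Lemma~\ref{IncreaseAllCon} to conclude $\ConOne(\rho',i)\supsetneq\ConOne(\rho,i)$ for every coordinate $i$, in particular for $i=1$. The subtlety is that between two consecutive type-3 calls the algorithm also executes Steps~\ref{CycleConsistencyStep}--\ref{EssentialRepresentation}, \ref{ParallelogramStep}, \ref{IrreducibleCongruenceStep}: domain reductions, essential-representation rewriting, parallelogram closure, and further congruence-weakening in Step~\ref{IrreducibleCongruenceStep}. I would argue that none of these operations can decrease $\ConOne(\rho,1)$: passing to a subalgebra of the domains only enlarges the link congruence $\ConOne$; replacing a relation by the minimal overrelation with the parallelogram property only adds tuples, hence only enlarges $\ConOne$; taking essential representations replaces a relation by projections, and for the relevant coordinate the corresponding projected relation inherits at least as large a $\ConOne$; and Step~\ref{IrreducibleCongruenceStep} explicitly congruence-weakens, which again strictly enlarges $\ConOne(\rho,1)$ by Lemma~\ref{IncreaseAllCon}. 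So along a type-3 chain, for each domain $D_i$ the relevant congruence strictly grows.

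Next I would bound the length of such a chain. Each constraint relation $\rho$ lives on a product of at most $|A|$ domains, each of size at most $|A|$, and $\ConOne(\rho,1)$ is a congruence on the first domain, hence a subset of $A\times A$; a strictly increasing chain of equivalence relations on a set of size $\le|A|$ has length at most $|A|$, giving length $\le|A|\le|A|^2$ for a single relation. But since at a given recursion level we need this to hold simultaneously across all constraints and the identity of the surviving constraint may change, the honest bound counts strictly increasing chains in the product of congruence lattices over all domains, or more crudely, the number of subsets of $A\times A$, which is $2^{|A|^2}$. A cleaner route: associate to the whole instance the tuple of congruences $(\ConOne(\rho,1))_\rho$ and observe that each type-3 step moves us strictly up in a well-founded order whose height is at most $2^{|A|^2}$, because the total information is a family of subsets of $A\times A$. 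Hence the depth of type-3 recursion is less than $2^{|A|^2}$.

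The main obstacle I anticipate is precisely the bookkeeping of \emph{which} relation to track across a type-3 step and showing the congruence genuinely does not shrink under the intervening Steps~\ref{CycleConsistencyStep}--\ref{IrreducibleCongruenceStep}, since after essential representation a single constraint is broken into several and after domain reduction the ambient algebras change; one must set up a correspondence (e.g. a potential function summing, over coordinates, the index of $\ConOne$ in the full congruence, or the number of $\sigma$-classes) that is monotone under all these operations and strictly decreasing under Step~\ref{FullSimplificationStep}. Once that monotonicity is nailed down, the numerical bound $2^{|A|^2}$ is immediate from the crude count of possible values of the potential.
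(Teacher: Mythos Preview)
Your overall strategy---use Lemma~\ref{IncreaseAllCon} to obtain a strict increase of the relevant congruence at each type-3 step, argue monotonicity under the intervening Steps~\ref{CycleConsistencyStep}--\ref{IrreducibleCongruenceStep}, and then bound the chain length by $2^{|A|^2}$---is the same as the paper's. But one of your monotonicity claims is wrong: passing to a subalgebra of a domain does \emph{not} in general enlarge $\ConOne(\rho,i)$. Restricting some $D_j$ deletes tuples from $\rho$ and can destroy the witnesses for pairs in $\ConOne(\rho,i)$; already on three-element domains one finds strict shrinkage. You flag ``after domain reduction the ambient algebras change'' as an anticipated obstacle in your last paragraph, yet earlier you assert the opposite, so as written the argument has a real gap exactly at the point you identify.

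The paper sidesteps this by working with a per-variable invariant rather than a per-constraint one. For a constraint $C$ and a variable $x$ in its scope, set $MaxComp(C,x)$ to be the largest equivalence relation on $D_x$ with which the $x$-coordinate of $C$ is compatible (this coincides with $\ConOne(C,x)$ once the relation is rectangular). A variable $x$ in $\Theta$ is declared \emph{weaker} than $x'$ in $\Theta'$ if either $D_x\subsetneq D_{x'}$, or the domains agree and every constraint of $\Theta$ through $x$ has $MaxComp$ containing that of some constraint of $\Theta'$ through $x'$. Domain reductions then fall under the first clause, so one never has to compare congruences across different underlying sets; Steps~\ref{EssentialRepresentation}--\ref{IrreducibleCongruenceStep} fall under the second clause because projection, parallelogram closure, and congruence-weakening can only enlarge $MaxComp$. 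At Step~\ref{FullSimplificationStep} itself, Lemma~\ref{IncreaseAllCon} together with the subsequent parallelogram closure gives $MaxComp(C',x)\supsetneq MaxComp(C,x)$ for every constraint and every variable simultaneously, so each variable becomes strictly weaker. Since $MaxComp(C,x)$ is a binary relation on $A$, the length of any strict chain in this order is below $2^{|A|^2}$. The switch from tracking a single constraint relation to tracking a variable is precisely what dissolves your bookkeeping obstacle.
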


\begin{proof}
First, we introduce a partial order on variables of instances.
For a constraint $C$ and a variable $x$ by $MaxComp(C,x)$ we denote the maximal equivalence relation $\sigma$
such that the variable $x$ of $C$ is compatible with $\sigma$.
We say that $x$ in $\Theta$ is \emph{weaker than} $x'$ in $\Theta'$ if
one of the following conditions holds:
\begin{enumerate}
\item the domain of $x$ is a proper subset of the domain of $x'$;
\item the domain of $x$ is equal to the domain of $x'$,
for every $C\in\Theta$ there exists $C'\in\Theta'$ such that
$MaxComp(C,x)\supseteq MaxComp(C',x')$.
\end{enumerate}
Since every constraint relation $\rho$ we have in Step~\ref{FullSimplificationStep}
is an essential relation with the parallelogram property
and $\ConOne(\rho,1)$ is an irreducible congruence,
it follows from Lemma \ref{IncreaseAllCon} that
$\ConOne(C',x)\supsetneq\ConOne(C,x)$
for every new constraint $C'$ we generate from $C$ in Step \ref{FullSimplificationStep}.
Since we provide the parallelogram property after the replacement
in Step \ref{FullSimplificationStep}, we have $MaxComp(C',x)\supsetneq MaxComp(C,x)$.
Thus, every variable is getting weaker.
It is easy to see that any other reduction makes all variables weaker or does not change them.
Therefore, the depth of the recursive calls of type 3 is less than the number of binary relations on the set $A$,
that is $2^{|A|^{2}}$.
\end{proof}

\subsection{Properties of a ternary absorption}

In \cite{MyProofCSP} we proved the following theorem.

\begin{thm}\label{NextReduction}\cite{MyProofCSP}
Suppose $\mathbf{A} = (A;w)$ is an algebra, $w$ is a special WNU of arity $m$.
Then one of the following conditions hold:
\begin{enumerate}
\item there exists a binary absorbing set $B\subsetneq A$,
\item there exists a center $C\subsetneq A$,
\item there exists a proper congruence $\sigma$ on $A$ such that
$(A;w)/\sigma$ is polynomially complete,
\item there exists a proper congruence $\sigma$ on $A$ such that
$(A;w)/\sigma$ is isomorphic to $(\mathbb Z_{p};x_{1}+\dots +x_{m})$.
\end{enumerate}
\end{thm}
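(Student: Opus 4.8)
The plan is to first reduce to the case where $\mathbf A$ is simple, and then to classify simple idempotent algebras carrying a weak near-unanimity term.

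\emph{Reduction to the simple case.} Suppose $\mathbf A$ has a proper congruence and let $\theta$ be a maximal one, so $\mathbf A/\theta$ is simple; since the defining identities of a special WNU are preserved by homomorphisms, $w$ induces a special WNU of arity $m$ on $\mathbf A/\theta$. We apply the theorem inductively to the smaller algebra $\mathbf A/\theta$ and pull the resulting condition back along the projection $\pi\colon\mathbf A\to\mathbf A/\theta$. If $\mathbf A/\theta$ has a binary absorbing subuniverse $B'$, then, applying the same binary term, $\pi^{-1}(B')$ is a binary absorbing subuniverse of $\mathbf A$. If $\mathbf A/\theta$ has a center $C'\subsetneq A/\theta$ witnessed by an algebra $\mathbf B$ with a special WNU and no binary absorbing subuniverse and by a subdirect subalgebra $R'$ of $(\mathbf A/\theta)\times\mathbf B$, then $R=\{(a,b)\mid(\pi(a),b)\in R'\}$ is a subdirect subalgebra of $\mathbf A\times\mathbf B$ with $\{a\in A\mid\forall b\in B\colon(a,b)\in R\}=\pi^{-1}(C')$, which is therefore a center of $\mathbf A$ properly contained in $A$. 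Finally, a proper congruence of $\mathbf A/\theta$ with polynomially complete (respectively, $(\mathbb Z_p;x_1+\dots+x_m)$) quotient corresponds, by the correspondence theorem, to a proper congruence of $\mathbf A$ with the same quotient. Hence it suffices to prove the statement for simple $\mathbf A$; the case $|A|=1$ is settled by $\sigma=A\times A$, so assume $|A|\ge2$.

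\emph{The classification of simple algebras.} Let $\mathbf A$ be simple with monolith $0_A\prec1_A$. A WNU is a Taylor operation, so type $1$ is omitted and the prime quotient $0_A\prec1_A$ has tame congruence type $2$, $3$, $4$, or $5$. If the type is $4$ or $5$, then $\mathbf A$ has a term operation acting as a semilattice on a two-element subalgebra, and by standard facts about absorption (cf.\ \cite{MyProofCSP}) this yields a proper binary absorbing subuniverse $B\subsetneq A$, which is condition (1). If the type is $2$, then $\mathbf A$ is abelian, hence, having a Taylor term, affine; a simple affine algebra with a special WNU is isomorphic to $(\mathbb Z_p;x_1+\dots+x_m)$ for a prime $p$ (one checks that simplicity forces $|A|=p$ and that specialness forces $m\equiv1\pmod p$), which is condition (4) with $\sigma=0_A$. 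So we are left with type $3$, where $\mathbf A$ is non-abelian.

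\emph{The Boolean case and the main obstacle.} Assume the type of the monolith is $3$ and $\mathbf A$ has no proper binary absorbing subuniverse; it remains to produce a center or to conclude that $\mathbf A$ is polynomially complete. Consider the subdirect subalgebras $R$ of $\mathbf A\times\mathbf B$, where $\mathbf B$ ranges over algebras with a special WNU of arity $m$ and no binary absorbing subuniverse. If for some such pair the set $\{a\in A\mid\forall b\in B\colon(a,b)\in R\}$ is nonempty and properly contained in $A$, it is by definition a center and we are in case (2). If no such relation exists, then one argues that $\mathbf A$ together with all constants on $A$ generates every operation on $A$, i.e.\ $\mathbf A$ is polynomially complete, which is case (3): the simultaneous absence of a binary absorbing subuniverse and of a center forces the subdirectly irreducible binary invariant relations of $\mathbf A$ to be degenerate enough that a discriminator-like term can be assembled. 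The main obstacle is exactly this last step — in the type-$3$ case one must exclude all intermediate possibilities and extract either the subdirect relation witnessing a center or the terms witnessing polynomial completeness — and this rests on a detailed commutator-theoretic analysis of subdirect subalgebras of $\mathbf A\times\mathbf B$, which is the technical core of \cite{MyProofCSP}.
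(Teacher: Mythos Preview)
The paper does not give its own proof of this theorem; it only quotes it from \cite{MyProofCSP}, so there is nothing in-paper to compare against and your sketch must stand on its own. It does not. The reduction to simple algebras and the treatment of the abelian (type~$2$) case are essentially fine, but the content of the theorem is precisely the non-abelian dichotomy: a simple non-abelian $(A;w)$ either has a proper binary absorbing subuniverse, or a proper center, or is polynomially complete. You yourself write that this ``rests on a detailed commutator-theoretic analysis \dots\ which is the technical core of \cite{MyProofCSP}.'' But that technical core \emph{is} the theorem you are meant to prove; labelling the monolith as type~$3$ and then invoking \cite{MyProofCSP} for the remainder is circular rather than a proof.

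The types~$4$ and~$5$ paragraph has an independent gap. Tame-congruence types describe the \emph{polynomial} clone on a minimal set, not the term clone, and a $(0_A,1_A)$-minimal set of a simple algebra need not be a subuniverse of $\mathbf A$. Hence ``$\mathbf A$ has a term operation acting as a semilattice on a two-element subalgebra'' does not follow from the type label alone. Even granting such a term on some two-element subuniverse $\{a,b\}$, that only shows one of $a,b$ binarily absorbs $\{a,b\}$, not that some $B\subsetneq A$ binarily absorbs $A$. Passing from local TCT data on a trace to global binary absorption in $A$ is again genuine work done in \cite{MyProofCSP}, not a ``standard fact'' one may simply cite.
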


Using Corollary 7.9.2 from \cite{MyProofCSP},
this theorem can be rewritten in the following form.
\begin{thm}\label{NextReduction}
Suppose $\mathbf{A} = (A;w)$ is an algebra, $w$ is a special WNU of arity $m$.
Then one of the following conditions hold:
\begin{enumerate}
\item there exists a binary or ternary absorbing set $B\subsetneq A$,
\item there exists a proper congruence $\sigma$ on $A$ such that
$(A;w)/\sigma$ is polynomially complete,
\item there exists a proper congruence $\sigma$ on $A$ such that
$(A;w)/\sigma$ is isomorphic to $(\mathbb Z_{p};x_{1}+\dots +x_{m})$.
\end{enumerate}
\end{thm}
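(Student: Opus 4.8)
The plan is to reduce the new Theorem~\ref{NextReduction} to the older four-case version by absorbing case~(2) of that theorem --- the existence of a \emph{center} $C\subsetneq A$ --- into case~(1), the existence of a binary or ternary absorbing subuniverse. This is exactly what the cited Corollary 7.9.2 of \cite{MyProofCSP} should provide: that a center in $\mathbf{A}=(A;w)$ with a special WNU always yields a ternary absorbing subuniverse. So the proof is essentially one line invoking that corollary, but I would spell out the bookkeeping.

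First I would restate the old Theorem~\ref{NextReduction}: for $\mathbf{A}=(A;w)$ with $w$ a special WNU of arity $m$, one of four things holds --- a binary absorbing set, a center, a PC quotient, or a $\mathbb{Z}_p$-affine quotient. If the first, third, or fourth case holds, we are immediately done, since each of these is already (verbatim) one of the three alternatives in the new statement. So the only work is the case where $A$ has a center $C\subsetneq A$ but no binary absorbing subuniverse. Here I would apply Corollary 7.9.2 from \cite{MyProofCSP}, which says (in the relevant form) that if $\mathbf{A}$ has a center then it has a ternary absorbing subuniverse --- equivalently, that every center contains, or gives rise to, a ternary absorbing subuniverse $B\subsetneq A$. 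That puts us in case~(1) of the new theorem and completes the argument.

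The only subtlety I anticipate is making sure the proper-subset condition transfers correctly: the center $C$ is assumed proper, and one needs the resulting ternary absorbing subuniverse $B$ to be proper as well (if $B=A$ the conclusion is vacuous). I would note that this is part of what Corollary 7.9.2 guarantees --- or, if necessary, observe that if the only ternary absorbing subuniverse were $A$ itself while $C\subsetneq A$ is a center, then the center-definition (via a subdirect $R\le \mathbf A\times\mathbf B$ with no binary absorbing subuniverse in $\mathbf B$) would already force a proper binary or ternary absorbing subuniverse, contradicting our assumption. A second routine point is that the arity $m$ of the WNU plays no role in the reduction: it is carried along unchanged from the old theorem to the new one, and the corollary we invoke is stated for the same algebra $(A;w)$.

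I do not expect any real obstacle here: the theorem is explicitly labelled as a rewriting of the previous one ``using Corollary 7.9.2'', so the mathematical content lives entirely in that corollary, which we are permitted to assume. The ``hard part'', such as it is, is purely expository --- arranging the case split so that each of the four original alternatives is visibly mapped to one of the three new ones, with the center-case routed through the corollary. A clean way to present it is: \textbf{Proof.} Apply the previous form of the theorem. In cases (1), (3), (4) there is nothing to prove. In case (2) there is a center $C\subsetneq A$; by Corollary 7.9.2 of \cite{MyProofCSP} there is then a ternary absorbing subuniverse $B\subsetneq A$, which is case (1) of the present statement. $\qed$
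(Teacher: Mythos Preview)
Your proposal is correct and matches the paper's approach exactly: the paper gives no detailed proof at all, only the single sentence ``Using Corollary 7.9.2 from \cite{MyProofCSP}, this theorem can be rewritten in the following form,'' and your case split (old cases (1), (3), (4) pass through verbatim; old case (2) --- a center --- is converted to a ternary absorbing subuniverse via Corollary~7.9.2) is precisely what that sentence abbreviates. Your remarks on properness and on the irrelevance of the arity $m$ are appropriate caveats but do not depart from the paper's intent.
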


An operation is called \emph{cyclic} if
$f(x_{1},x_{2},\ldots,x_{n}) = f(x_{2},x_{3},\ldots,x_{n},x_{1})$.
We know from \cite{cyclicterms} (see Theorem 4.1) that the existence of a cyclic term is equivalent to the existence of a WNU term.
\begin{thm}\label{cyclicthm}\cite{cyclicterms}
An idempotent algebra $(A;F)$ has a WNU term operation if and only if it has a cyclic term operation of arity $p$ for
every prime number $p>|A|$.
\end{thm}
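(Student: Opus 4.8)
The plan is to prove the two implications separately. The ``if'' direction is immediate: if $f$ is a cyclic operation of arity $n\geq 2$, then the tuples $(y,x,\dots,x),(x,y,x,\dots,x),\dots,(x,\dots,x,y)$ are cyclic rotations of one another, so invariance of $f$ under cyclic shifts of its arguments forces $f(y,x,\dots,x)=f(x,y,x,\dots,x)=\dots=f(x,\dots,x,y)$, i.e. $f$ is already a WNU operation; in particular a cyclic term of prime arity $p>|A|$ (with $|A|\geq 2$; the case $|A|=1$ being trivial) yields a WNU term.

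For the ``only if'' direction, fix a prime $p>|A|$. First I would reformulate the problem. A $p$-ary term operation of $(A;F)$ is precisely an element of the finite algebra $\mathbf{F}_p\leq\mathbf{A}^{A^{p}}$ generated by the $p$ coordinate projections $e_1,\dots,e_p$ (the $p$-generated free algebra of the variety of $\mathbf{A}$); the cyclic shift of the index set $A^{p}$ induces an automorphism $\tau$ of $\mathbf{A}^{A^{p}}$ cyclically permuting $e_1,\dots,e_p$, hence an automorphism of $\mathbf{F}_p$, of order $p$ since $|A|\geq 2$, and a $p$-ary term is cyclic on $\mathbf{A}$ exactly when the element of $\mathbf{F}_p$ it names is a fixed point of $\tau$. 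So it suffices to show that $\tau$ has a fixed point, equivalently that the subdirect digraph $\mathbb{G}=\{(g,\tau g):g\in\mathbf{F}_p\}\leq\mathbf{F}_p^{2}$ --- a disjoint union of loops and directed $p$-cycles --- contains a loop.

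The heart of the argument is then a loop lemma for WNU algebras, which I would prove by induction on $|A|$ using absorption theory. If $\mathbf{A}$ has a proper absorbing subalgebra $\mathbf{B}$, one uses the absorbing term to confine the relevant linked relation to $\mathbf{B}$ and invokes the inductive hypothesis for $\mathbf{B}$; if $\mathbf{A}$ has a proper congruence $\theta$, one first produces a $p$-ary cyclic term of $\mathbf{A}/\theta$ by induction and then lifts it, equivalence class by equivalence class, using the inductive hypothesis for the $\theta$-classes and the WNU term. The structural engine behind these reductions is the Absorption Theorem of Barto--Kozik: a linked subdirect subalgebra of a product of two finite WNU algebras is the full product, or one of the factors has a proper (binary or ternary) absorbing subuniverse; this is what controls the linked components that appear when one decomposes the orbit structure of $\tau$. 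The base case is an absorption-free simple WNU algebra, where the structure theory of such minimal algebras (in the spirit of Theorem~\ref{NextReduction}) shows that $\mathbf{A}$ is affine, i.e. polynomially equivalent to a module over $\mathbb{Z}_{q}$ for a prime $q\leq|A|$. There a $p$-ary cyclic --- in fact symmetric --- term is available explicitly, namely $t(x_1,\dots,x_p)=c(x_1+\dots+x_p)$ where $c$ is an integer with $cp\equiv 1\pmod{q}$; such $c$ exists because $p$ is prime and $p>|A|\geq q$, so $\gcd(p,q)=1$, and then $t$ is idempotent and symmetric, hence cyclic. This is the only point at which the hypothesis ``$p$ prime, $p>|A|$'' enters, and it cannot be dropped: over $(\mathbb{Z}_{q};x_1+\dots+x_m)$ a cyclic term of arity $n$ exists if and only if $\gcd(n,q)=1$.

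I expect the genuine difficulty to lie in two places: the Absorption Theorem itself --- wringing enough structure out of absorption-freeness together with the single WNU identity to force either fullness of linked subdirect relations or affineness --- and the bookkeeping of the induction, namely guaranteeing at each step that the term obtained in a proper subalgebra or quotient can be reassembled into a term of $\mathbf{A}$ of exactly arity $p$ that still satisfies the cyclic identity, rather than merely some weaker Taylor-type identity.
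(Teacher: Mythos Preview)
The paper does not give its own proof of this statement: Theorem~\ref{cyclicthm} is quoted from \cite{cyclicterms} (Barto--Kozik, Theorem~4.1) and used only as a black box in Lemma~\ref{terAbsImpliesCenter}. So there is nothing in the paper to compare your argument against.

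Your sketch is, nonetheless, a faithful outline of the argument in \cite{cyclicterms}. Two points deserve tightening. First, the induction in the loop lemma is not on $|A|$ for the fixed ambient algebra $\mathbf{A}$, but on the size of the finite algebra carrying the digraph --- in your reformulation this is $\mathbf{F}_p\in\mathsf{HSP}(\mathbf{A})$, typically enormously larger than $\mathbf{A}$; the absorbing-subalgebra and quotient reductions take place inside that algebra (or inside a suitable subdirect factor of it), not inside $\mathbf{A}$. Second, the base-case implication ``absorption-free simple Taylor $\Rightarrow$ affine'' should not be justified by appealing to Theorem~\ref{NextReduction} of the present paper: that result is stated under a special-WNU hypothesis and is proved in \cite{MyProofCSP} using machinery that already presupposes the Taylor/WNU framework you are reconstructing, so invoking it here risks circularity. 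In \cite{cyclicterms} this step is obtained independently, via the Absorption Theorem together with a direct argument that an absorption-free finite idempotent Taylor algebra is abelian, hence affine.
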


\begin{lem}\label{terAbsImpliesCenter}
Suppose $B$ absorbs $D$ with a ternary idempotent operation $f$,
$u$ is an idempotent cyclic operation.
Then there exists a cyclic operation $v\in\Clo(\{u,f\})$
preserving the relation $(B\times D)\cup (D\times B)$.
Moreover, $B$ is a center for the algebra $(D;v)$.
\end{lem}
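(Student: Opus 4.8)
The plan is to construct the operation $v$ explicitly from $u$ and $f$, using the fact that composing $f$ with itself along one coordinate gives us an operation that "pulls toward $B$" ever more strongly, and then symmetrizing with the cyclic term $u$ to recover cyclicity while preserving the absorption behaviour. First I would record the basic property: since $f$ is a ternary idempotent operation witnessing $B\trianglelefteq D$, we have $f(B,B,D),f(B,D,B),f(D,B,B)\subseteq B$. Iterating $f$ in its first coordinate, define $f_0(x,y)=x$ and $f_{k+1}(x,y)=f(f_k(x,y),y,y)$; then $f_k(B,D)\subseteq B$ and $f_k(D,B)\subseteq B$ for all $k\ge 1$ by idempotence and the absorption identities, and $f_k(x,x)=x$. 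Because $D$ is finite, the sequence $f_k$ is eventually idempotent as a binary operation in a suitable sense; the key point is simply that we have a binary term $g\in\Clo(f)$ with $g(x,x)=x$, $g(B,D)\cup g(D,B)\subseteq B$, i.e. $B$ is a binary absorbing-like subuniverse already at the level of pairs — more precisely $(B\times D)\cup(D\times B)$ is closed under the coordinatewise action of $g$ together with the diagonal.

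Next, to get a \emph{cyclic} operation preserving $\rho:=(B\times D)\cup(D\times B)$, I would let $p$ be a prime larger than $|D|$ and invoke Theorem~\ref{cyclicthm}: the algebra $(D;u)$ being idempotent with a WNU term $u$ has a cyclic term $u_p\in\Clo(u)$ of arity $p$. Now form the $p$-ary operation
$$v(x_1,\ldots,x_p)=u_p\bigl(g(x_1,h_1),\,g(x_2,h_2),\,\ldots,\,g(x_p,h_p)\bigr)$$
where each $h_i$ is a "rotation average" of the remaining coordinates chosen so that the whole expression is invariant under cyclic shift; concretely one sets $h_i=u_{p-1}(x_{i+1},\ldots,x_{i-1})$ (indices mod $p$), which is cyclic-shift covariant, so that $v$ itself becomes cyclic. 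Idempotence of $v$ follows from idempotence of $u_p$, $u_{p-1}$ and $g$. That $v$ preserves $\rho$ is the crucial check: given a $p$-tuple of pairs $(a_j,b_j)\in\rho$, in each coordinate $j$ at least one of $a_j,b_j$ lies in $B$; applying $g(\cdot,h)$ in each coordinate and then $u_p$, one tracks which side (the "$a$-side" or the "$b$-side") of the resulting pair lands in $B$ — here the absorption property $g(B,D)\cup g(D,B)\subseteq B$ together with $u_p(B,\ldots,B)\subseteq B$ does the work, after a short case analysis on how many of the $p$ coordinates have their left entry in $B$.

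Finally, for the "moreover" clause I would exhibit $B$ as a center of $(D;v)$ directly from the definition of center: take $\mathbf{B}=(B;v)$ as the auxiliary algebra (it is a subalgebra since $B$ is a subuniverse, and $v$ restricted to $B$ is still a special-type cyclic/WNU operation of the right arity), and take $R=\rho=(B\times D)\cup(D\times B)\subseteq D\times B$ — wait, one must be careful with the coordinate order, so I would instead use $R=\{(a,b)\in D\times B\mid a\in B\text{ or }b\in B\}$, which is subdirect in $D\times B$ and is preserved by $v$ by the previous paragraph. Then $\{a\in D\mid \forall b\in B:(a,b)\in R\}$ is exactly $B$. It remains to argue $\mathbf{B}=(B;v)$ has no binary absorbing subuniverse; this should follow because $u$ (hence $u_p$, hence $v$ on $B$) was taken to come from the ambient idempotent structure and one can reduce, if necessary, to the case where $(D;u)$ — and so $(B;u)$ — has no proper binary absorbing subuniverse by first replacing $B$ by a minimal one. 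The main obstacle I expect is the coordinatewise case analysis verifying that $v$ preserves $\rho$: making the "rotation average" $h_i$ precise so that $v$ is genuinely cyclic \emph{and} the absorption bookkeeping still goes through simultaneously is the delicate point, and it may be cleaner to first prove a purely binary statement ($\rho$ has a cyclic polymorphism of every large prime arity built from $g$ and $u$) and then quote Theorem~\ref{cyclicthm} in the two-element-index setting rather than writing the $p$-ary formula by hand.
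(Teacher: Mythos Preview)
Your first step already fails. You claim that the iterates $f_{k}(x,y)=f(f_{k-1}(x,y),y,y)$ satisfy $f_{k}(B,D)\subseteq B$, but $f_{1}(B,D)=f(B,D,D)$, and ternary absorption only gives $f(B,B,D),f(B,D,B),f(D,B,B)\subseteq B$; there is no control over $f(B,D,D)$. More to the point, a binary idempotent term $g\in\Clo(f)$ with $g(B,D)\cup g(D,B)\subseteq B$ is exactly a witness that $B$ is a \emph{binary} absorbing subuniverse, and the whole content of the lemma is to pass from ternary to something usable without assuming binary absorption. So no such $g$ can be produced in general, and everything built on top of it (the formula for $v$, the preservation check) collapses. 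Separately, your $v$ uses an operation $u_{p-1}$ that is never constructed, and your center argument leaves the key hypothesis ``$\mathbf B$ has no binary absorbing subuniverse'' completely open.

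The paper takes a different route that sidesteps both problems. First, pick with Theorem~\ref{cyclicthm} a cyclic $u'\in\Clo(u)$ of \emph{odd} arity $n$. On the two-element set the ternary majority $m$ generates the $n$-ary majority; take a term for it and literally substitute $f$ for $m$ to obtain an $n$-ary $g\in\Clo(f)$. By the absorption identities of $f$, this $g$ inherits the ``majority'' behaviour relative to $B$: if more than half of the $n$ inputs lie in $B$, the output is in $B$. Now set
\[
v(x_{1},\ldots,x_{n})=u'\bigl(g(x_{1},\ldots,x_{n}),\,g(x_{2},\ldots,x_{n},x_{1}),\,\ldots,\,g(x_{n},x_{1},\ldots,x_{n-1})\bigr),
\]
which is cyclic because $u'$ is. To see $v$ preserves $\rho=(B\times D)\cup(D\times B)$: given $(a_{j},b_{j})\in\rho$, the index sets $\{j:a_{j}\in B\}$ and $\{j:b_{j}\in B\}$ cover $\{1,\ldots,n\}$; since $n$ is odd one of them has size $>n/2$, and then every cyclic shift of that coordinate has a majority in $B$, so all the inner $g$-values lie in $B$ and hence so does the outer $u'$-value. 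For the ``center'' part the paper does not use $(B;v)$ at all: it takes the auxiliary algebra to be $(\{0,1\};v)$ with $v$ acting as $n$-ary majority, and $R=(B\times\{0,1\})\cup(D\times\{1\})$. Self-duality of majority on $\{0,1\}$ immediately rules out any proper binary absorbing subuniverse there, which is exactly the missing ingredient in your sketch.
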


\begin{proof}
By Theorem \ref{cyclicthm} there exists a cyclic operation $u'\in\Clo(u)$ of an odd arity.
Let $n$ be the arity of $u'$.
Let us consider a ternary majority operation $m$ on 2-element set.
We know from \cite{post} that
the $n$-ary majority operation belongs to $\Clo(m)$. 
Consider a term $t$ over $m$ that defines the $n$-ary majority operation.
Replace every operation $m$ in it by $f$ to define an operation $g$ of arity $n$.
It is not hard to see that the operation $g(x_{1},\ldots,x_{n})$
satisfies the following property:
if more than half of the variables are from $B$ then the result is from $B$.
Put
$$v(x_{1},\ldots,x_{n}) =
u'(g(x_{1},\ldots,x_{n}),g(x_{2},\ldots,x_{n},x_{1}),\ldots,g(x_{n},x_{1},\ldots,x_{n-1})).$$
It is not hard to see that $v$ is a cyclic operation preserving
the relation $(B\times D)\cup (D\times B)$.

Let us show that $B$ is a center for the algebra $(D;v)$. Put $E = \{0,1\}$.
Let $v$ be defined on $E$ as a majority operation.
Put $R = (B\times \{0,1\})\cup (D\times \{1\})$.
It is not hard to see that
$v$ preserves $R$, hence $B$ is a center.
\end{proof}

The following three theorems are proved in \cite{MyProofCSP}.

\begin{thm}\label{AbsorptionStepThm}\cite{MyProofCSP}
Suppose $\Theta$ is a cycle-consistent irreducible CSP instance,
$B$ is a binary absorbing set of $D_{i}$.
Then $\Theta$ has a solution if and only if
$\Theta$ has a solution with $x_{i}\in B$.
\end{thm}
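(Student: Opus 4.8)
The implication ``$\Leftarrow$'' is trivial, since a solution of $\Theta$ with $x_i\in B$ is a solution of $\Theta$. For ``$\Rightarrow$'', let $t$ be the binary term operation witnessing that $B$ absorbs $D_i$; since $t$ is built from $w$, it is defined on all of $A$ and preserves every relation preserved by $w$, so the solution set $S$ of $\Theta$ is a subalgebra of $D_1\times\dots\times D_n$ closed under $t$. Assuming $S\neq\varnothing$, I must produce a tuple of $S$ whose $i$-th coordinate lies in $B$. The plan is induction on $\sum_j|D_j|$, with a case split according to which alternative of the definition of irreducibility holds for $\Theta$: fragmented, linked, or solution set subdirect.

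If $\Theta$ is fragmented, let $\Theta_1$ be the part whose variable set contains $x_i$. Then $\Theta_1$ is again cycle-consistent and irreducible --- the subinstances appearing in the definition of irreducibility for $\Theta_1$ are among those for $\Theta$ --- it has strictly smaller $\sum_j|D_j|$, and it inherits a solution from $\Theta$; the induction hypothesis yields a solution of $\Theta_1$ with $x_i\in B$, which combines with an arbitrary solution of the remaining part into a solution of $\Theta$. If the solution set $S$ is subdirect, then $\proj_i(S)=D_i\supseteq B$, so $\Theta$ already has a solution with $x_i=b$ for any chosen $b\in B$. Hence it remains to treat the case in which $\Theta$ is connected and linked while $\proj_j(S)\subsetneq D_j$ for some $j$.

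This connected, linked case is the heart of the matter, and one cannot simply bootstrap here: the assertion that the reduction of $\Theta$ to $B$ at $x_i$ has a solution is precisely what is to be proved. So I would argue by contradiction, assuming $\Theta$ has a solution while $\proj_i(S)$, a subalgebra of $D_i$, is disjoint from $B$. The difficulty is genuine --- a proper subalgebra of $D_i$ need not meet the absorbing subuniverse $B$, and two absorbing subuniverses of an algebra with a WNU term can even be disjoint (for instance $\{0\}$ and $\{1\}$ under a majority operation on $\{0,1\}$) --- so the contradiction must come from the global structure of $\Theta$. The plan is to use the absorbing term $t$ to correct a given solution along paths of $\Theta$, propagating the move of $x_i$ towards $B$ and relying on cycle-consistency to keep the corrections coherent around closed walks and on linkedness to reach every value; along each constraint a nonempty absorbing subuniverse transports to an absorbing subuniverse of the neighbouring domain, which is the feature that makes the propagation go through. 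This is the analogue in this setting of the core absorption lemmas in the theory of CSPs over Taylor algebras (Barto--Kozik), and it is carried out in detail in \cite{MyProofCSP}. The step I expect to be the main obstacle is exactly this propagation/correction argument --- forcing cycle-consistency, linkedness and absorption to interact so as to guarantee that $\proj_i(S)$ meets $B$; the remaining ingredients (the trivial direction, that $S$ is a $t$-subalgebra, the fragmented and subdirect cases, and the stability of cycle-consistency and irreducibility under the reductions used) are routine.
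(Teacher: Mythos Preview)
The paper does not actually prove this theorem: it is imported verbatim from \cite{MyProofCSP} and stated without proof. So there is nothing to compare your argument against in the present paper; the ``paper's own proof'' is simply a citation.

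As for your proposal itself, the easy pieces are fine: the backward implication is trivial, the solution set is indeed closed under any term in $\Clo(w)$, and applying the irreducibility hypothesis to $\Theta$ itself (which is one of the admissible $\Theta'$) does give the trichotomy fragmented/linked/subdirect. The subdirect and fragmented reductions are handled correctly. However, the substantive content of the theorem lies entirely in the linked case, and there your proposal is not a proof but a description of what a proof ought to do, followed by a pointer to \cite{MyProofCSP}. The sentence ``it is carried out in detail in \cite{MyProofCSP}'' is precisely the gap: you have not produced the propagation/correction argument, only asserted that one exists. In particular, the vague plan of ``correcting a given solution along paths'' does not by itself explain why cycle-consistency and irreducibility force $\proj_i(S)\cap B\neq\varnothing$; the actual argument in \cite{MyProofCSP} uses the full strength of irreducibility (applied to auxiliary instances built from projections of constraints, not just to $\Theta$ itself), not merely linkedness of $\Theta$. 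So your write-up is at the same level of completeness as the paper's treatment --- a citation --- with some correct but inessential scaffolding added on top.
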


\begin{thm}\label{CenterStepThm}\cite{MyProofCSP}
Suppose $\Theta$ is a cycle-consistent irreducible CSP instance,
$C$ is a center of $D_{i}$.
Then $\Theta$ has a solution if and only if
$\Theta$ has a solution with $x_{i}\in C$.
\end{thm}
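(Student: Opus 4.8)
The forward direction is immediate. For the converse, suppose $\Theta$ has a solution; the plan is to construct one with $x_{i}\in C$ by induction on $|D_{1}|+\dots+|D_{n}|$, combining the defining data of the center with the absorption reduction of Theorem~\ref{AbsorptionStepThm} and the structure theory of algebras without a binary absorbing subuniverse. We assume $C\neq\varnothing$ (empty centers are never used by the algorithm; for $C=\varnothing$ the argument below simply shows that $\Theta$ has no solution). Fix $\mathbf{B}=(B;w_{B})$ and a subdirect $R$ in $D_{i}\times B$ witnessing that $C$ is a center of $(D_{i};w)$, so that $C=\{a\in D_{i}\mid\{a\}\times B\subseteq R\}$ and $\mathbf{B}$ has no binary absorbing subuniverse.

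\textbf{Reduction to a linked instance.} Since $\Theta$ is irreducible, $\Theta$ is fragmented, linked, or has a subdirect solution set. If the solution set is subdirect it already contains solutions with $x_{i}=a$ for every $a\in D_{i}\supseteq C$, and we are done. If $\Theta$ is fragmented, the sub-instance on the part containing $x_{i}$ is cycle-consistent, irreducible, has a solution, and has strictly smaller total domain size; by the induction hypothesis it has a solution with $x_{i}\in C$, which we join with any solution on the other part. Similarly, if some $D_{j}$ has a proper binary absorbing subuniverse $B_{j}$, we reduce $D_{j}$ to $B_{j}$ (Theorem~\ref{AbsorptionStepThm} keeps a solution), re-establish cycle-consistency and irreducibility by constraint propagation, and apply the induction hypothesis to the smaller instance. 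So we may assume $\Theta$ is linked, not fragmented, and no $D_{j}$ has a proper binary absorbing subuniverse.

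\textbf{Propagating from $\mathbf{B}$.} The heart of the proof is to attach, compatibly, $\mathbf{B}$-valued companion data to $\Theta$: informally, a fresh variable $y$ over $B$ constrained by $R(x_{i},y)$, propagated through the linked instance $\Theta$ (via relations obtained by composing the constraints of $\Theta$ with $R$ and with $w$) so that the enlarged instance stays cycle-consistent. One then shows that the set $B^{*}\subseteq B$ of values that $y$ can take together with a solution of $\Theta$ is a subuniverse of $\mathbf{B}$ which, because $\Theta$ is linked and the companion data is propagated through the WNU $w$, absorbs $\mathbf{B}$; since $\mathbf{B}$ has no binary absorbing subuniverse, this forces $B^{*}=B$. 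Finally, choosing for every $b\in B$ a solution of the enlarged instance with $y=b$ and applying $w$ coordinatewise to their restrictions to the variables of $\Theta$, we obtain a solution of $\Theta$ whose value $a_{0}$ at $x_{i}$ satisfies $\{a_{0}\}\times B\subseteq R$ (because $w$ preserves $R$), i.e.\ $a_{0}\in C$. This closes the induction.

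The main obstacle is exactly the propagation step: converting ``$\mathbf{B}$ has no binary absorbing subuniverse'' into ``the companion variable of $x_{i}$ ranges over all of $B$''. This is where the real content of centrality lies and where the absorption/centrality machinery of \cite{MyProofCSP} is needed, namely that a subdirect relation one of whose factors is binary-absorption-free cannot transmit a proper absorbing restriction across a linked, cycle-consistent instance. A secondary, routine obstacle is keeping the instance cycle-consistent and irreducible under the reductions feeding the induction and checking that re-propagation never destroys the solution being tracked; this is handled exactly as for Theorem~\ref{AbsorptionStepThm}.
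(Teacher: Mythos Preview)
The present paper does not prove this theorem: it is quoted from \cite{MyProofCSP} (see the sentence ``The following three theorems are proved in \cite{MyProofCSP}'' immediately preceding Theorem~\ref{AbsorptionStepThm}). So there is no in-paper argument to compare against; I comment only on the soundness of your sketch.

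The outline has the right flavour, and you yourself flag the propagation step as the hard part. But there is a second, unflagged gap that is fatal as written. Granting $B^{*}=B$, you pick a solution $s_b$ for each $b\in B$ and ``apply $w$ coordinatewise'' to obtain a solution whose $x_i$-value $a_0$ allegedly satisfies $\{a_0\}\times B\subseteq R$. This does not follow: $w$ is $m$-ary, so one application combines only $m$ of the $s_b$ and, by preservation of $R$, yields the single pair $\bigl(w(a_{b_1},\dots,a_{b_m}),\,w(b_1,\dots,b_m)\bigr)\in R$, not a witness that $a_0$ is paired with \emph{every} $b\in B$. Iterating does not help either: the $w$-closure of $\{(a_b,b):b\in B\}$ inside $R$ need not contain any element whose first coordinate lies in $C$. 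The argument in \cite{MyProofCSP} does not manufacture an element of $C$ by applying a term to finitely many solutions; it proceeds through structural lemmas about subdirect products with an absorption-free factor and about how centers propagate along linked cycle-consistent instances. Regarding the gap you do acknowledge: note that you need $B^{*}$ to \emph{binary} absorb $\mathbf B$, since absence of a binary absorbing subuniverse is the only hypothesis available on $\mathbf B$; ``absorbs'' without the arity bound is not enough, and nothing in your sketch singles out a binary term.
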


\begin{thm}\label{PCStepThm}\cite{MyProofCSP}
Suppose $\Theta$ is a cycle-consistent irreducible CSP instance,
there does not exist a binary absorbing subuniverse or a center on $D_{j}$
for every $j$,
$(D_{i};w)/\sigma$ is a polynomially complete algebra, 
$E$ is an equivalence class of $\sigma$.
Then $\Theta$ has a solution if and only if
$\Theta$ has a solution with $x_{i}\in E$.
\end{thm}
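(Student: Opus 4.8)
The implication ``$\Leftarrow$'' is immediate, since a solution of $\Theta$ with $x_{i}\in E$ is in particular a solution of $\Theta$. For ``$\Rightarrow$'', suppose $\Theta$ has a solution and put $B:=\proj_{x_{i}}(\mathrm{Sol}(\Theta))$; this is a subuniverse of $D_{i}$ because the solution set is preserved by $w$. Since $E$ is an arbitrary class of $\sigma$, it suffices to prove that $B$ meets every class of $\sigma$, i.e. that $B/\sigma=D_{i}/\sigma$. As $(D_{i};w)/\sigma$ is polynomially complete it is simple, so $\sigma$ is a maximal proper congruence on $D_{i}$, and $B/\sigma$ is either all of $D_{i}/\sigma$ or a proper subuniverse of a PC algebra. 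The plan is to assume the latter and reach a contradiction, working with a counterexample $\Theta$ that is minimal in the total size of its domains.

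The two ingredients I would use are: (i) polynomially complete algebras are ``structureless'' --- they have no proper absorbing subuniverse and no proper center, and every subdirect relation $R\le (D_{i};w)/\sigma\times\mathbf B$ is, modulo a congruence on $\mathbf B$, either full or the graph of a surjective homomorphism onto the PC factor; and (ii) cycle-consistency and irreducibility let one propagate the congruence $\sigma$ along the instance. Concretely, running the congruence-propagation of the irreducibility check from the maximal congruence $\sigma$ on $D_{i}$ yields a variable set $\mathbf X'\ni x_{i}$ together with congruences $\sigma_{j}$ on the corresponding domains such that the $\sigma$-class of $x_{i}$ in any solution determines the $\sigma_{j}$-class of each $x_{j}\in\mathbf X'$; irreducibility then forces the projection of $\mathrm{Sol}(\Theta)$ onto $\mathbf X'$ to be subdirect --- the fragmented and non-linked cases having already been removed in Steps~\ref{CycleConsistencyStep} and \ref{IrreducibilityStep} --- which shows that every class of $\sigma$ occurs as the $\sigma$-class of $x_{i}$ in some solution of the projected instance. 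Upgrading this from the projection onto $\mathbf X'$ to the full instance $\Theta$, using minimality, is the bulk of the work.

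To close the argument I would, in the minimal counterexample, build an auxiliary algebra on $D_{i}$ (or on a quotient of it) and a compatible subdirect relation with a domain $\mathbf B$ assembled from the rest of $\Theta$, in which the union of the $\sigma$-classes actually hit by solutions becomes a binary absorbing subuniverse or a center --- contradicting the hypothesis that no $D_{j}$ has one, and thereby reducing to Theorems~\ref{AbsorptionStepThm} and \ref{CenterStepThm}; when no such reduction is available, standard absorption-theoretic arguments applied to the linked pieces of $\Theta$ (legitimate because the PC factor is absorption-free) force the missing class to reappear. I expect the genuine obstacle to be precisely this bridging step: converting the local, purely algebraic fact that $(D_{i};w)/\sigma$ carries no nontrivial internal structure into the global statement that no class of $\sigma$ can be avoided by all solutions of $\Theta$. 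This is where cycle-consistency, irreducibility, and the absence of binary absorbing subuniverses and centers on every domain must all be used simultaneously, and where a careful minimal-counterexample analysis, rather than any short computation, is needed.
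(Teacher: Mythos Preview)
The paper does not contain a proof of this theorem. It is stated with the citation \cite{MyProofCSP} and introduced by the sentence ``The following three theorems are proved in \cite{MyProofCSP},'' immediately before Theorems~\ref{AbsorptionStepThm}, \ref{CenterStepThm}, and \ref{PCStepThm}. There is therefore no proof in the present paper to compare your proposal against; the result is imported wholesale.

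On the proposal itself: it is an outline, not a proof, and you acknowledge this when you say the ``bridging step'' is where ``a careful minimal-counterexample analysis, rather than any short computation, is needed.'' Two places where the outline already wobbles. First, the irreducibility hypothesis concerns instances $\Theta'$ whose constraints are projections of constraints of $\Theta$; it tells you the solution set of such a $\Theta'$ is subdirect (or $\Theta'$ is linked or fragmented), but the solution set of the projection of $\Theta$ onto $\mathbf X'$ can be strictly larger than $\proj_{\mathbf X'}(\mathrm{Sol}(\Theta))$, so subdirectness of the former does not directly give what you need about the latter. Second, the clause ``becomes a binary absorbing subuniverse or a center --- contradicting the hypothesis that no $D_j$ has one, and thereby reducing to Theorems~\ref{AbsorptionStepThm} and \ref{CenterStepThm}'' is internally inconsistent: if you have produced a proper absorbing subuniverse or center on some $D_j$, you have contradicted the standing hypothesis and are finished; there is nothing to ``reduce to,'' and in any case those theorems are invoked precisely when such a subuniverse \emph{exists}, which your hypothesis rules out. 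The substantive content of the theorem lies exactly in the step you have left open, and nothing in the present paper supplies it.
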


Combining Theorem \ref{CenterStepThm} and Lemma \ref{terAbsImpliesCenter} we obtain the following theorem.
\begin{thm}\label{ternaryAbsorptionStep}
Suppose $\Theta$ is a cycle-consistent irreducible CSP instance,
$B$ is a ternary absorbing set of $D_{i}$.
Then $\Theta$ has a solution if and only if
$\Theta$ has a solution with $x_{i}\in B$.
\end{thm}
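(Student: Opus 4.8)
The plan is to reduce Theorem~\ref{ternaryAbsorptionStep} to Theorem~\ref{CenterStepThm} via Lemma~\ref{terAbsImpliesCenter}. Concretely: we are given a cycle-consistent irreducible instance $\Theta$ and a ternary absorbing subuniverse $B\subsetneq D_i$. We want to produce an algebra structure on $D_i$ for which $B$ is a center, and then transfer the conclusion. The subtlety is that Lemma~\ref{terAbsImpliesCenter} produces a \emph{new} operation $v$ on $D_i$, not necessarily a term of the original algebra $(D_i;w)$, so we cannot directly apply Theorem~\ref{CenterStepThm} to the instance $\Theta$ as it stands.

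The natural fix is to work with the polymorphism clone rather than a single WNU. First I would observe that $B$ absorbs $D_i$ with \emph{some} ternary term operation $t\in\Clo(D_i;w)$; since the absorption condition only involves values of $t$, and the algorithm has already arranged idempotence (all operations in sight are idempotent because $w$ is), we may take $t$ idempotent. Take $u$ to be an idempotent cyclic term of the algebra (it exists by Theorem~\ref{cyclicthm}, since $w$ is a WNU). Lemma~\ref{terAbsImpliesCenter} then gives a cyclic operation $v\in\Clo(\{u,t\})\subseteq\Clo(D_i;w)$ such that $B$ is a center for $(D_i;v)$, and moreover $v$ preserves $(B\times D_i)\cup(D_i\times B)$. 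The key point is that $v$ is still a polymorphism of every constraint relation of $\Theta$, because $\Gamma$ was closed under $w$ and hence every relation in play is invariant under all of $\Clo(D_i;w)\supseteq\Clo(\{u,t\})$. So $\Theta$ remains a valid instance over the algebra $(D_i;v)$ on the $i$-th coordinate.

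There is one bookkeeping issue: the definition of center, of cycle-consistency, and Theorem~\ref{CenterStepThm} are phrased for an algebra with a \emph{special} WNU. So the remaining step is to replace $v$ by a special WNU $w'$ of the relevant algebras that still witnesses $B$ as a center; by the standard fact (Lemma 4.7 of \cite{miklos}, cited in the excerpt) there is a special WNU in $\Clo(v)$, and since being a center is stated in terms of the existence of a subdirect relation $R$ with the stated absorption-freeness on the second factor — a condition that only gets easier to satisfy as the clone grows — $B$ stays a center for $(D_i;w')$ with $w'$ special. One also checks cycle-consistency and irreducibility of $\Theta$ are properties of the relations alone, hence unchanged. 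Then Theorem~\ref{CenterStepThm} applied to $\Theta$ viewed over $(D_i;w')$ yields: $\Theta$ has a solution iff it has one with $x_i\in B$, which is exactly the claim.

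The main obstacle is the compatibility of the algebraic frameworks: Theorem~\ref{CenterStepThm} is stated for a fixed special WNU common to all domains, whereas our construction only modifies the algebra on $D_i$ and only guarantees a cyclic (not special WNU) operation there. I expect the bulk of the care to go into verifying that ``$B$ is a center'' is preserved when we pass from $v$ to a special WNU in $\Clo(\{w,v\})$ and into confirming that the hypotheses of Theorem~\ref{CenterStepThm} (cycle-consistency, irreducibility, and the center condition) are all insensitive to enlarging the term clone on the single coordinate $D_i$. Everything else — idempotence of $t$, invariance of the constraint relations, the arity juggling in Lemma~\ref{terAbsImpliesCenter} — is routine once that framework issue is pinned down.
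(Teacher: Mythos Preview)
Your proposal is correct and follows exactly the paper's approach: the paper's entire proof is the single sentence ``Combining Theorem~\ref{CenterStepThm} and Lemma~\ref{terAbsImpliesCenter} we obtain the following theorem,'' and your write-up simply unpacks this combination with the requisite bookkeeping about passing to a special WNU in $\Clo(v)\subseteq\Clo(w)$. One small wording issue: when you pass from $v$ to $w'\in\Clo(v)$ the clone \emph{shrinks}, not grows, but your conclusions are still right---$R$ remains a subalgebra because $w'$ is a $v$-term, and absence of a binary absorbing subuniverse in $(E;v)$ persists in $(E;w')$ since $\Clo(w')\subseteq\Clo(v)$.
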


\subsection{Adding a new linear variable}

To prove the main result of this section we will need the following definitions from \cite{MyProofCSP}.

For an instance $\Omega$ by $\Expanded(\Omega)$ (\emph{Expanded Coverings}) we denote the set of all instances $\Omega'$
such that there exists a mapping $S$ from the set of all variables of $\Omega'$ to the
set of all variables of $\Omega$ satisfying the following conditions:
\begin{enumerate}
\item for every constraint $((x_{1},\ldots,x_{n}),\rho)$ of $\Omega'$
either the variables $S(x_{1}),\ldots,S(x_{n})$ are different and the constraint $((S(x_{1}),\ldots,S(x_{n})),\rho)$ is weaker than
or equal to some constraint of $\Omega$,
or $\rho$ is a binary reflexive relation and $S(x_{1}) = S(x_{2})$;
\item
if a variable $x$ appears in $\Omega$ and $\Omega'$ then $S(x) = x$.
\end{enumerate}


We say that a congruence $\sigma$ is \emph{irreducible} if
it cannot be represented as an intersection of other binary relations $\delta_{1},\ldots,\delta_{s}$ compatible with $\sigma$.
For an irreducible congruence $\sigma$ on a set $A$
by $\cover{\sigma}$ we denote the minimal binary relation $\delta\supsetneq \sigma$ compatible with $\sigma$.

Suppose $\sigma_{1}$ and $\sigma_{2}$ are congruences on $D_{1}$ and $D_{2}$, correspondingly.
A relation $\rho\subseteq D_{1}^{2}\times D_{2}^{2}$ is called \emph{a bridge} from $\sigma_{1}$ to $\sigma_{2}$ if
the first two variables of $\rho$ are compatible with $\sigma_{1}$,
the last two variables of $\rho$ are compatible with $\sigma_{2}$,
$\proj_{1,2}(\rho) \supsetneq \sigma_{1}$,
$\proj_{3,4}(\rho) \supsetneq \sigma_{2}$,
and
$(a_{1},a_2,a_{3},a_{4})\in \rho$ implies
$(a_1,a_2)\in \sigma_{1}\Leftrightarrow (a_3,a_4)\in \sigma_{2}.$
A bridge $\rho\subseteq D^{4}$ is called \emph{reflexive} if
$(a,a,a,a)\in \rho$ for every $a\in D$.

We say that two congruences $\sigma_{1}$ and $\sigma_{2}$ on a set $D$ are \emph{adjacent}
if there exists a reflexive bridge from $\sigma_{1}$ to $\sigma_{2}$.
We say that two constraints $C_{1}$ and $C_{2}$ are \emph{adjacent} in a common variable $x$ if
$\ConOne(C_{1},x)$ and $\ConOne(C_{2},x)$ are adjacent.
An instance is called \emph{connected} if
every constraint in it is  rectangular and
for every two constraints
there exists a path that connects them.

We will need the following statements from \cite{MyProofCSP}.

\begin{lem}\label{CriticalMeansIrreducible}[Lemma 8.3 in \cite{MyProofCSP}]
Suppose $\rho$ is a critical subdirect relation, the $i$-th variable of $\rho$ is rectangular.
Then $\ConOne(\rho,i)$ is an irreducible congruence.
\end{lem}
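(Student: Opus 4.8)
The plan is to argue by contradiction. Assuming $\sigma:=\ConOne(\rho,i)$ is reducible, I will manufacture a representation of $\rho$ itself as an intersection of strictly larger subalgebras of $\mathbf A_{1}\times\dots\times\mathbf A_{n}$, contradicting the criticality of $\rho$.

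For the setup, note first that since the $i$-th variable of $\rho$ is rectangular, $\sigma$ is a congruence on $A_{i}$, and since $\rho$ is subdirect, $\proj_{i}(\rho)=A_{i}$, so $\sigma$ is an equivalence relation on all of $A_{i}$; in particular it is reflexive on $A_{i}$. Now suppose $\sigma$ is not irreducible: there are relations $\delta_{1},\dots,\delta_{s}$ (subalgebras of $\mathbf A_{i}\times\mathbf A_{i}$, since all relations here are invariant under $w$), each compatible with $\sigma$ and with $\delta_{j}\supsetneq\sigma$, such that $\sigma=\delta_{1}\cap\dots\cap\delta_{s}$. For each $j$ put
$$\rho_{j}(x_{1},\dots,x_{n}):=\exists y\;\rho(x_{1},\dots,x_{i-1},y,x_{i+1},\dots,x_{n})\wedge\delta_{j}(y,x_{i}).$$
As $\rho$ and each $\delta_{j}$ are subalgebras and $\rho_{j}$ is primitive positive definable from them, $\rho_{j}$ is a subalgebra of $\mathbf A_{1}\times\dots\times\mathbf A_{n}$.

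Next I would establish three facts: $\rho\subseteq\rho_{j}$, $\rho_{j}\supsetneq\rho$, and $\rho_{1}\cap\dots\cap\rho_{s}\subseteq\rho$. The first is immediate on taking $y=x_{i}$, using that $\delta_{j}\supseteq\sigma$ is reflexive on $A_{i}$. For the second, choose $(a,b)\in\delta_{j}\setminus\sigma$; by subdirectness $a$ occurs in coordinate $i$ of some tuple of $\rho$, and replacing that occurrence by $b$ gives a tuple that lies in $\rho_{j}$ but not in $\rho$, the latter because $(a,b)\notin\ConOne(\rho,i)$. The third fact is the heart of the argument: given $(x_{1},\dots,x_{n})\in\bigcap_{j}\rho_{j}$, choose witnesses $y_{1},\dots,y_{s}$; the tuples obtained from $(x_{1},\dots,x_{n})$ by putting $y_{j}$ in coordinate $i$ all belong to $\rho$ and differ only in that coordinate, so $(y_{j},y_{j'})\in\ConOne(\rho,i)=\sigma$ for all $j,j'$; then, using compatibility of each $\delta_{j}$ with $\sigma$ in its first argument, $(y_{j},x_{i})\in\delta_{j}$ yields $(y_{1},x_{i})\in\delta_{j}$, hence $(y_{1},x_{i})\in\delta_{1}\cap\dots\cap\delta_{s}=\sigma=\ConOne(\rho,i)$; finally, applying rectangularity of the $i$-th variable of $\rho$ to the $\rho$-tuple carrying $y_{1}$ in coordinate $i$, we may replace $y_{1}$ by $x_{i}$ and conclude $(x_{1},\dots,x_{n})\in\rho$.

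Combining the first and third facts gives $\rho=\rho_{1}\cap\dots\cap\rho_{s}$, and by the second fact each $\rho_{j}$ properly contains $\rho$; thus $\rho$ is an intersection of subalgebras of $\mathbf A_{1}\times\dots\times\mathbf A_{n}$ all different from $\rho$, contradicting the assumption that $\rho$ is critical. Hence $\ConOne(\rho,i)$ is irreducible. I expect the only genuine obstacle to be the third fact, where one must be careful that the various existential witnesses $y_{j}$ are pairwise $\sigma$-related and then exploit compatibility of the $\delta_{j}$ to collapse them to a single witness before invoking rectangularity; everything else is routine bookkeeping with primitive positive definitions.
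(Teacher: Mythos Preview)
The paper does not give its own proof of this lemma; it is imported from \cite{MyProofCSP} (Lemma~8.3 there) and stated here without argument, so there is nothing to compare against directly.

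Your proof is correct and is the natural one: from a decomposition $\sigma=\delta_{1}\cap\dots\cap\delta_{s}$ you build the ``$\delta_{j}$-thickenings'' $\rho_{j}$ of $\rho$ along coordinate $i$, check $\rho\subseteq\rho_{j}$, $\rho_{j}\neq\rho$, and $\bigcap_{j}\rho_{j}=\rho$, the last via rectangularity after collapsing the witnesses $y_{j}$ to a single $\sigma$-class using compatibility of the $\delta_{j}$. This contradicts criticality. The one point worth making explicit is that for each $\rho_{j}$ to be a \emph{subalgebra} of $\mathbf A_{1}\times\dots\times\mathbf A_{n}$ (which is what criticality speaks about) you need each $\delta_{j}$ to be invariant under $w$. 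The paper's definition of ``irreducible congruence'' literally says only ``binary relations compatible with $\sigma$'', but throughout the paper---see for instance the proof of Lemma~\ref{notIrreducibleImplies}, which uses exactly your pp-construction from such $\delta_{j}$---the $\delta_{j}$ are tacitly taken to be invariant relations. With that reading your argument goes through without change.
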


\begin{thm}\label{ParPropertyForAll}[Theorem 9.5 in \cite{MyProofCSP}]
Suppose $\Theta$ is a cycle-consistent irreducible CSP instance,
its constraint $((x_{1},\ldots,x_{n}),\rho)$ is crucial.
Then $\rho$ is a critical relation with the parallelogram property.
\end{thm}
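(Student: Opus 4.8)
The statement asserts that a crucial constraint of a cycle-consistent irreducible instance is a critical relation with the parallelogram property. I would split the claim into its two halves: criticality first, then the parallelogram property.

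For criticality: the constraint $C=((x_1,\dots,x_n),\rho)$ being crucial means $\Theta$ has no solution but replacing $C$ by all weaker constraints yields a solvable instance. The first thing to observe is that $\rho$ has no dummy variables, since otherwise $C$ would be equivalent to a constraint on fewer variables — hence strictly weaker — and replacing $C$ by this weaker constraint would not change the solution set, contradicting cruciality. Next I would argue $\rho$ is not representable as an intersection $\rho=\rho_1\cap\cdots\cap\rho_r$ of proper superrelations (subalgebras of $\mathbf A_1\times\cdots\times\mathbf A_n$) with no dummy variables: each such $\rho_j\supsetneq\rho$ gives, via the same variables, a constraint weaker than $C$, and replacing $C$ by all these $\rho_j$ simultaneously does not change $\Theta$'s solution set (the conjunction of the $\rho_j$ is again $\rho$), again contradicting cruciality. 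Thus $\rho$ is critical.

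For the parallelogram property, the plan is to use cycle-consistency and irreducibility of $\Theta$ together with the structure already forced on $\rho$. Since $\rho$ is critical and we can arrange (by permuting coordinates) to look at any particular variable, Lemma~\ref{CriticalMeansIrreducible} will apply once we know the relevant variable is rectangular, giving that $\ConOne(\rho,i)$ is an irreducible congruence. The real work is to upgrade rectangularity of each variable to the full parallelogram property. Here I would run the ``finding an appropriate projection'' / essential-tuple machinery: take a tuple witnessing failure of the parallelogram property (a rectangle $\alpha_1\beta_2,\beta_1\alpha_2,\beta_1\beta_2\in\rho$ with $\alpha_1\alpha_2\notin\rho$ after some permutation), project $\Theta$ onto the variables where this happens, and use irreducibility: the projected instance must be fragmented, linked, or have a subdirect solution set. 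Combining this with cycle-consistency and the crucial-ness of $C$ (which controls how much $\rho$ can be weakened without creating a solution), one derives a contradiction unless $\rho$ already has the parallelogram property.

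The main obstacle will be the parallelogram-property half, specifically making the irreducibility hypothesis of $\Theta$ interact correctly with a projection that only sees part of the rectangle. Cycle-consistency gives local coherence of binary projections, and cruciality pins down that $\ConOne(\rho,1)$ (equivalently, via the permutation, each $\ConOne(\rho,i)$) cannot be enlarged without solving $\Theta$; the delicate point is to show that a violation of rectangularity or of the parallelogram property would let one weaken $C$ — either to a congruence-weakened constraint or to a projection — while still keeping $\Theta$ unsolvable, contradicting that $C$ is crucial. I expect this to require a careful case analysis according to whether the projected subinstance is fragmented, linked, or subdirect, mirroring the argument in \cite{MyProofCSP} but now invoking the essential-representation and parallelogram-closure procedures introduced in this paper to stay within polynomially-sized relations.
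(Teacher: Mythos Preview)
The paper does not prove this theorem; it is quoted as Theorem~9.5 from \cite{MyProofCSP} among ``the following statements from \cite{MyProofCSP}'' that the present paper imports, so there is no in-paper argument to compare your attempt against.

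On the content of your proposal: the criticality half is correct and essentially forced by the definitions. If $\rho$ had a dummy variable, or decomposed as $\rho=\rho_1\cap\cdots\cap\rho_r$ with each $\rho_j\supsetneq\rho$ a subalgebra, then among the constraints weaker than $C$ there is a subfamily whose conjunction is equivalent to $C$; replacing $C$ by all weaker constraints therefore still yields an instance with no solution, contradicting cruciality. This part uses neither cycle-consistency nor irreducibility of $\Theta$.

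The parallelogram-property half is where the substance lies, and here your sketch does not contain a proof. Two concrete gaps. First, you invoke Lemma~\ref{CriticalMeansIrreducible} ``once we know the relevant variable is rectangular,'' but you never say how rectangularity is obtained; rectangularity is itself a consequence of the parallelogram property you are trying to establish, so as written the logic is circular. Second, the passage ``project $\Theta$ onto the variables where this happens, and use irreducibility'' is not an executable step: a rectangle violation $\alpha_1\beta_2,\beta_1\alpha_2,\beta_1\beta_2\in\rho$, $\alpha_1\alpha_2\notin\rho$ lives entirely inside the single relation $\rho$, not across several constraints of $\Theta$, and you have not specified which subinstance you are projecting to or why the fragmented/linked/subdirect trichotomy would detect it. The essential-representation and parallelogram-closure procedures you appeal to at the end are algorithmic devices of the present paper for transforming relations during the CSP algorithm; they do not help show that a given crucial relation already enjoys the parallelogram property. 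The actual argument in \cite{MyProofCSP} is considerably more involved and proceeds through the structure theory of critical relations rather than along the lines you sketch.
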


\begin{thm}\label{ParPropertyMain}[Theorem 9.5 in \cite{MyProofCSP}]
Suppose $D^{(1)}$ is a 1-consistent minimal linear reduction for a cycle-consistent irreducible CSP instance
$\Theta$,
the constraint $((x_{1},\ldots,x_{n}),\rho)$ is crucial in $D^{(1)}$.
Then $\rho$ is a critical relation with the parallelogram property.
\end{thm}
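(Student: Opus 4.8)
The plan is to establish the two conclusions in turn, essentially reproducing the proof of Theorem~\ref{ParPropertyForAll} while carrying the minimal linear reduction $D^{(1)}$ along; only one step is genuinely sensitive to the presence of the reduction. First I would show $\rho$ is critical. Since $\Theta$ is cycle-consistent, all of its constraints are subdirect, so $\rho$ is a subdirect subalgebra of $\mathbf D_1\times\dots\times\mathbf D_n$. Suppose $\rho$ had a dummy variable, or could be written as a proper intersection $\rho=\rho_1\cap\dots\cap\rho_k$ of other (necessarily larger) subalgebras of that product. In the first case the constraint obtained by deleting the dummy variable, and in the second case each constraint $((x_1,\ldots,x_n),\rho_j)$, is weaker than $C=((x_1,\ldots,x_n),\rho)$. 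Hence replacing $C$ by all weaker constraints yields an instance logically equivalent to $\Theta$ — every weaker constraint is implied by $C$, while the ones just exhibited together imply $C$ — so it still has no solution in $D^{(1)}$, contradicting that $C$ is crucial in $D^{(1)}$. Therefore $\rho$ is critical.

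For the parallelogram property I would argue by contradiction. Let $\rho'\supseteq\rho$ be the minimal relation with the parallelogram property furnished by the ``Providing the Parallelogram Property'' construction, and assume $\rho'\supsetneq\rho$. Decomposing $\rho'$ into an essential representation $\delta_1,\dots,\delta_t$, every constraint $((\ldots),\delta_j)$, as well as $((x_1,\ldots,x_n),\rho')$ itself, is weaker than $C$; so, exactly as in the criticality argument, replacing $C$ by all weaker constraints produces an instance implying $\Theta':=(\Theta\setminus\{C\})\cup\{((x_1,\ldots,x_n),\rho')\}$, and by cruciality $\Theta'$ has a solution in $D^{(1)}$. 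Since $\Theta$ has none, any such solution uses at $C$ a tuple $\gamma\in\rho'\setminus\rho$, that is, one added — possibly after several iterations — as the fourth vertex of a rectangle over $\rho$. The core of the argument is to rule this out: as in the proof of Theorem~\ref{ParPropertyForAll} one repairs the solution one coordinate block at a time, replacing $\gamma$ by a neighbouring vertex of its rectangle and propagating the change through the rest of $\Theta$ using cycle-consistency and irreducibility, until the value used at $C$ lies in $\rho$ — which contradicts that $\Theta$ has no solution in $D^{(1)}$. Each such swap stays inside $D^{(1)}$ precisely because $D^{(1)}$ is a $1$-consistent minimal linear reduction, so every $D^{(1)}_i$ is compatible with the linking congruences that arise during the propagation. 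Thus $\rho'\supsetneq\rho$ is impossible and $\rho$ has the parallelogram property.

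The step I expect to be the main obstacle is this last repairing/propagation argument. After reducing to $D^{(1)}$ the instance is only known to be $1$-consistent, not cycle-consistent or irreducible, so one cannot invoke the unreduced argument directly on the reduced instance; the propagation must instead be run inside the original cycle-consistent irreducible $\Theta$, and one has to certify — using minimal-linearity of $D^{(1)}$, Lemma~\ref{CriticalMeansIrreducible}, and the explicit description of relations invariant under $x_1+\dots+x_m$ on the linear factors $D_i/\sigma_i$ — that no value produced by the propagation ever leaves the reduced domains. Establishing this confinement is the technical heart of the proof; the rest is bookkeeping that runs in parallel with the $D^{(1)}=D$ case covered by Theorem~\ref{ParPropertyForAll}.
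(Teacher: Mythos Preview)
The paper does not contain a proof of this theorem. It is listed among ``the following statements from \cite{MyProofCSP}'' and is simply quoted verbatim as Theorem~9.5 of that reference, with no argument given here. So there is no proof in the present paper to compare your proposal against; the result is imported as a black box and then used (in the discussion after Step~\ref{LinkedStep} and in the proof of Theorem~\ref{addingOneVariable}).

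As for your sketch on its own merits: the criticality half is fine and is the standard argument. The parallelogram half, however, is not really a proof but a statement of intent. You correctly identify the obstacle --- after passing to $D^{(1)}$ you lose cycle-consistency and irreducibility, so you cannot simply rerun the unreduced argument on the reduced instance --- but the proposed fix (``run the propagation in $\Theta$ and certify that no value leaves $D^{(1)}$ using minimal-linearity'') is exactly the nontrivial content of the theorem, and you have not supplied a mechanism for it. In particular, the ``swap one rectangle vertex and propagate'' picture is misleading: the actual argument in \cite{MyProofCSP} does not repair a single solution tuple-by-tuple, but goes through the structural theory of critical relations (keys, bridges, connectedness) developed in Sections~8--9 there, and uses the specific properties of minimal linear reductions with respect to that machinery. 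Your outline would need those ingredients to become a proof; as written, the crucial step is asserted rather than established.
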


\begin{thm}\label{FindPerfectConstraint}[Theorem 9.8 in \cite{MyProofCSP}]
Suppose $D^{(1)}$ is a 1-consistent minimal linear reduction of a cycle-consistent irreducible CSP instance $\Theta$,
$\Theta$ is crucial in $D^{(1)}$ and not connected.
Then there exists an instance $\Theta'\in\Expanded(\Theta)$ that is crucial in $D^{(1)}$
and contains a linked connected component whose solution set is not subdirect.
\end{thm}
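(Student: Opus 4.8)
The plan is to use the rigidity forced by cruciality and then construct $\Theta'$ as a finite ``unfolding'' of $\Theta$ along paths that witness its failure to be connected. Since every constraint of $\Theta$ is crucial in $D^{(1)}$, Theorem~\ref{ParPropertyMain} shows that every constraint relation $\rho$ is a critical relation with the parallelogram property; in particular it is rectangular, $\ConOne(\rho,i)$ is a congruence for every coordinate $i$, and by Lemma~\ref{CriticalMeansIrreducible} it is an \emph{irreducible} congruence. Hence the failure of $\Theta$ to be connected is not caused by a non-rectangular constraint: it means that the constraints of $\Theta$ split into at least two blocks, where two constraints lie in one block if they are joined by a chain of constraints, consecutive ones sharing a variable $x$ for which $\ConOne(\cdot,x)$ are adjacent, i.e.\ linked by a reflexive bridge. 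We may also assume the Gaifman graph of $\Theta$ is connected, after discarding variables occurring in no constraint: if $\Theta$ were fragmented, its reduction to $D^{(1)}$ would be unsolvable only because of one fragment, and then cruciality of any constraint living in another fragment would fail.

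For the construction, fix a base variable $z_{0}$ and form the covering that creates a fresh copy $x^{p}$ of each variable $x$ for each path $p$ from $z_{0}$ to $x$ in $\Theta$, a copy of each constraint of $\Theta$ for each consistent choice of path-copies of its scope, and a binary constraint $\sigma(x^{p},x^{q})$ on two copies of the same variable whenever their histories force their values to be $\sigma$-related --- with $\sigma=\ConOne(\rho,1)$ this is just congruence-weakening, and using $\cover{\ConOne(\rho,1)}$ along an unfolded edge supplies the reflexive bridges. Any finite instance obtained this way lies in $\Expanded(\Theta)$ with $S(x^{p})=x$: each of its constraints is weaker than or equal to a constraint of $\Theta$, or is a reflexive binary relation on two copies of one variable. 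Because we may unfold along \emph{all} paths, once the unfolding is carried far enough the connected component $\Lambda$ of $z_{0}$ becomes linked (any two values of any variable of $\Lambda$ are joined by a path) and connected (consecutive congruences along an unfolded path sit in the reflexive bridge supplied by the $\sigma$-edges, hence are adjacent).

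It remains to prune $\Theta'$ so that it stays crucial in $D^{(1)}$ and to obtain non-subdirectness. For cruciality, iteratively delete constraints and weaken the survivors as far as possible while keeping the covering unsolvable in $D^{(1)}$; since the $S$-image of any solution of the covering in $D^{(1)}$ would be a solution of $\Theta$ in $D^{(1)}$, which does not exist, the resulting instance $\Theta'$ is still unsolvable in $D^{(1)}$, hence crucial there by minimality, and by Theorem~\ref{ParPropertyMain} its constraint relations again have the parallelogram property. For non-subdirectness one argues that cruciality confines the obstruction to a single linked connected block: if every linked connected component of $\Theta'$ had subdirect solution set, then choosing for $z_{0}$ and each $a\in D_{z_{0}}$ a solution of $\Lambda$ with $z_{0}=a$ and transporting these along the covering map, while using that the two original blocks of constraints of $\Theta$ are \emph{not} adjacent, would manufacture a reflexive bridge between the two congruences that $\Theta$ certifies to be non-adjacent --- a contradiction. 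Hence some linked connected component of $\Theta'$ has non-subdirect solution set, which is what we want.

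The routine parts are membership in $\Expanded(\Theta)$, inheritance of $D^{(1)}$-unsolvability along the covering map $S$, and the maintenance of the parallelogram property via Theorem~\ref{ParPropertyMain}. The main obstacle is the calibration: one must unfold enough that a single component is simultaneously linked and connected, yet not so much that its solution set becomes subdirect or cruciality in $D^{(1)}$ is destroyed. Concretely, one picks the finite sub-unfolding as a minimal path-witness of the non-connectedness of $\Theta$, and then runs the bridge-and-adjacency argument that subdirectness of that component would produce a reflexive bridge forbidden by $\Theta$. That bridge construction is the real content of the theorem; the rest is bookkeeping on top of Theorem~\ref{ParPropertyMain} and Lemma~\ref{CriticalMeansIrreducible}.
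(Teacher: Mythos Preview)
This theorem is not proved in the present paper: it is merely quoted as Theorem~9.8 of \cite{MyProofCSP} and used as a black box in the proof of Theorem~\ref{addingOneVariable}. So there is no ``paper's own proof'' to compare your proposal against here.

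Assessing your sketch on its own merits: the overall architecture --- pass to an expanded covering, make it crucial in $D^{(1)}$, then locate a linked connected component whose solution set is not subdirect --- is indeed the shape of Zhuk's argument in \cite{MyProofCSP}, and your use of Theorem~\ref{ParPropertyMain} and Lemma~\ref{CriticalMeansIrreducible} at the outset is correct. However, the proposal is a plan, not a proof, and you say so yourself. Two concrete gaps stand out. First, your unfolding construction is underspecified: you assert that ``once the unfolding is carried far enough'' a component becomes simultaneously linked and connected, but linkedness and connectedness (in the adjacency sense) pull in opposite directions --- unfolding tends to \emph{destroy} linkedness by separating copies of a variable --- and you give no mechanism that guarantees both at once. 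Your remark that inserting constraints $\cover{\ConOne(\rho,1)}(x^{p},x^{q})$ ``supplies the reflexive bridges'' conflates a binary congruence cover with a $4$-ary bridge; adjacency of two constraints in a common variable requires a reflexive bridge between their $\ConOne$-congruences, and a binary reflexive constraint on two copies of one variable does not by itself produce that.

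Second, and more seriously, the non-subdirectness step is exactly where the content lies, as you acknowledge, and your paragraph there is not an argument. The claim that subdirectness of every component would ``manufacture a reflexive bridge between the two congruences that $\Theta$ certifies to be non-adjacent'' needs an actual construction of that bridge from solutions of the components, and this is the long combinatorial core of Theorem~9.8 in \cite{MyProofCSP} (it occupies several pages and uses additional machinery on key relations, optimal bridges, and a minimality argument on the covering). Absent that, your proposal identifies the right target but does not hit it.
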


\begin{lem}\label{PathInConnectedComponent}[Corollary 8.15.1 in \cite{MyProofCSP}]
Suppose $\Theta$ is a cycle-consistent linked connected instance
whose constraint relations are critical rectangular relations.
Then for every constraint $C$ and its variable $x$ there exists
a bridge $\delta$ from $\ConOne(C,x)$ to $\ConOne(C,x)$ such that
$\delta(x,x,y,y)$ defines a full relation.
\end{lem}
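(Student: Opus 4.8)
The plan is to produce $\delta$ by ``doubling'' a single closed walk through the variable $x$ that is long enough to connect every pair of elements of $D_x$; the bridge axioms will come from rectangularity, criticality and cycle-consistency, and the ``full diagonal'' clause will be immediate from the way the walk is chosen. First I would record the congruence bookkeeping. Since $\Theta$ is cycle-consistent it is $1$-consistent, so every constraint relation is subdirect; being also critical and rectangular, Lemma~\ref{CriticalMeansIrreducible} shows that $\ConOne(C',x')$ is an irreducible congruence for every constraint $C'$ and every variable $x'$ of $C'$. Put $\sigma=\ConOne(C,x)$. Irreducibility makes $\cover{\sigma}$ well defined and gives the fact I will use repeatedly: a $\sigma$-compatible binary relation on $D_x$ that strictly contains $\sigma$ already contains $\cover{\sigma}$.

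Second, to a closed walk $P=x-C_1-z_1-C_2-\dots-C_{l-1}-z_{l-1}-C_l-x$ (so that $C=C_1$) I would attach the relation $\delta_P\subseteq D_x^{4}$ obtained by running two copies of the walk and identifying them along the non-$x$ coordinates of $C$. Then $(a,a,b,b)\in\delta_P$ if and only if $P$ connects $a$ and $b$; in particular, by cycle-consistency $\delta_P$ is reflexive. Rectangularity of the constraint relations along $P$ gives that the first two and the last two coordinates of $\delta_P$ are compatible with $\sigma$; the identification at $C$ puts the two ``$C$-values'' into $\ConOne(\rho,1)=\sigma$, and pushing this relation back along $P$ (using cycle-consistency and criticality) yields the biconditional clause $(u_1,u_2)\in\sigma\Leftrightarrow(v_1,v_2)\in\sigma$; and choosing $P$ so that it actually leaves $C$ and returns through genuinely different constraints forces $\proj_{1,2}(\delta_P)\supsetneq\sigma$, hence $\proj_{1,2}(\delta_P)\supseteq\cover{\sigma}$ and symmetrically $\proj_{3,4}(\delta_P)\supseteq\cover{\sigma}$. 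So $\delta_P$ is a reflexive bridge from $\sigma$ to $\sigma$.

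Third, I would use linkedness together with connectedness to choose $P$ once and for all so that, in addition, $P$ connects every pair $a,b\in D_x$: one concatenates, for each ordered pair, a loop at $x$ realising that pair, routed through the rest of the instance by connectedness. For this $P$ the relation $\delta:=\delta_P$ has $(a,a,b,b)\in\delta$ for all $a,b\in D_x$, i.e.\ $\delta(x,x,y,y)$ is the full relation, which is the conclusion of the lemma.

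The main obstacle is hidden in the second step: the conditions $\proj_{1,2}(\delta_P)\supsetneq\sigma$ and the biconditional clause are genuinely not automatic --- a trivial ``out and back'' walk along $C$ produces only $\sigma$, which is a congruence and hence useless, so one needs the walk to make a nontrivial loop, and here the parallelogram/rectangular structure of the critical constraint relations together with the irreducibility of $\sigma$ must be used to control how the connection congruences compose along $P$. This is exactly the content of Lemma~8.15 of \cite{MyProofCSP}, of which the present statement is a corollary; I would expect to establish it by induction on the length of $P$, extending a bridge across one more constraint at a time and invoking rectangularity at each step to keep the congruence structure under control.
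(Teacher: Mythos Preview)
This lemma is not proved in the present paper at all; it is simply quoted as Corollary~8.15.1 of \cite{MyProofCSP}. So there is no in-paper argument to compare against, and the question is whether your reconstruction stands on its own.

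It does not, because you are misreading the hypothesis ``connected''. In this paper an instance is \emph{connected} when every constraint relation is rectangular and any two constraints are joined by a chain of \emph{adjacent} constraints, where adjacency of $C_{1},C_{2}$ in a common variable $z$ means, by definition, that a reflexive bridge from $\ConOne(C_{1},z)$ to $\ConOne(C_{2},z)$ already exists. Thus connectedness is not the connectivity of the variable--constraint incidence graph (linkedness already gives you that); it literally hands you a supply of bridges to be composed. Your $\delta_{P}$, built from a raw closed walk by ``identifying along the non-$x$ coordinates of $C$'', never touches those bridges, and the hard bridge axioms do not come out: identifying the non-$x$ coordinates of $C=C_{1}$ forces $(u_{1},u_{2})\in\ConOne(C,x)=\sigma$, so in fact $\proj_{1,2}(\delta_{P})\subseteq\sigma$ rather than $\supsetneq\sigma$; and nothing in cycle-consistency or criticality alone yields the biconditional $(u_{1},u_{2})\in\sigma\Leftrightarrow(v_{1},v_{2})\in\sigma$ along an arbitrary walk through constraints whose $\ConOne$'s may be incomparable to $\sigma$. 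The route taken in \cite{MyProofCSP} is the opposite one: take the adjacency chain from $C$ back to $C$ that connectedness supplies and compose the reflexive bridges along it (composition of bridges preserves all the bridge axioms and reflexivity), obtaining a reflexive bridge $\delta_{0}$ from $\sigma$ to $\sigma$. Only then does linkedness enter, to enlarge the diagonal $\{(a,b):(a,a,b,b)\in\delta_{0}\}$ to all of $D_{x}\times D_{x}$ via further bridge compositions. Your concatenation-of-loops picture is pointing at this last step, but the loops must be built out of bridges, not out of raw constraints; the ``induction on the length of $P$'' you propose at the end should be an induction along the adjacency chain, not along a walk in the instance.
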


\begin{lem}\label{LinkedLink}[Corollary 8.10.1 in \cite{MyProofCSP}]
Suppose $\sigma\subseteq A^{2}$ is an irreducible congruence,
$\rho(x_{1},x_{2},y_{1},y_{2})$ is a bridge from
$\sigma$ to $\sigma$ such that $\rho(x,x,y,y)$ defines a full relation.
Then there exists a prime number $p$
and a relation $\zeta\subseteq A\times A\times \mathbb Z_{p}$
such that
$(x_{1},x_{2},0)\in \zeta\Leftrightarrow (x_{1},x_{2})\in\sigma$
and $\proj_{1,2}\zeta = \sigma^{*}$.
\end{lem}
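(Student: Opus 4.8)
The plan is to build $\zeta$ from the connected components of $\rho$ sitting inside $\sigma^{*}$. First I would reduce to the case where $\sigma$ is the diagonal: $\sigma$, $\sigma^{*}$ and all four coordinates of $\rho$ are compatible with $\sigma$, so everything descends to $\mathbf A/\sigma$, where $\sigma$ becomes the diagonal $\Delta$, $\sigma^{*}$ becomes the smallest subalgebra of $(\mathbf A/\sigma)^{2}$ properly above $\Delta$, and $\rho$ becomes a bridge from $\Delta$ to $\Delta$ with $\rho(x,x,y,y)$ still full; a relation over $\mathbf A/\sigma$ with the two required properties pulls back along $\mathbf A\to\mathbf A/\sigma$ to one over $A$ with the required properties. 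I would also normalize $\rho$ to a sub-bridge with $\proj_{1,2}\rho=\proj_{3,4}\rho=\sigma^{*}$, which exists by minimality of $\sigma^{*}$ (this is a small but slightly delicate reduction).

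Now view $\rho$ (which after normalization lives in $\sigma^{*}\times\sigma^{*}$) as a binary relation $R$ on the set $\sigma^{*}$, and let $\Xi$ be its reflexive, symmetric and transitive closure. Being a closure of a subalgebra, $\Xi$ is a subalgebra, hence a congruence of the algebra $(\sigma^{*};w)$; put $\mathbf G:=(\sigma^{*};w)/\Xi$, whose elements are the connected components of $R$. From the bridge axiom and the fact that $\rho(x,x,y,y)$ is full one gets that $(a,a)$ has exactly the diagonal pairs as $R$-neighbours, so the diagonal $\Delta$ is a single component, which I call $0\in G$; and $|G|\ge 2$ because $\sigma^{*}\supsetneq\Delta$.

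The heart of the proof is to show $\mathbf G$ is a linear algebra on $\mathbb Z_{p}$ for some prime $p$, with $0$ its zero. The easy half: $\Delta$ is a subuniverse of $(\sigma^{*};w)$ that is an entire $\Xi$-class, so $\{0\}$ is a one-element subuniverse of $\mathbf G$; and a subuniverse $U$ of $\mathbf G$ with $\{0\}\subsetneq U\subsetneq G$ would pull back to a subalgebra of $(\mathbf A/\sigma)^{2}$ strictly between $\Delta$ and $\sigma^{*}$, compatible with $\Delta$, contradicting minimality of $\sigma^{*}$ --- so the only subuniverses of $\mathbf G$ through $0$ are $\{0\}$ and $G$. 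The substantive half is to produce on $G$ the abelian group structure given by concatenation of pairs, $[(a,b)]\oplus[(b,c)]:=[(a,c)]$: one checks it is total and well defined (totality from $\proj_{1,2}\rho=\proj_{3,4}\rho=\sigma^{*}$, well-definedness from $\Xi$ being a congruence), with identity $0$ and with reversed pairs as inverses (using that all diagonal pairs lie in the single component $0$ --- this is exactly where $\rho(x,x,y,y)$ full is used), and that $w$ is affine with respect to $\oplus$. Since multiplication by the needed scalars reduces to integer multiplication on the finite group $(G;\oplus)$, subuniverses of $\mathbf G$ through $0$ are exactly subgroups of $(G;\oplus)$, so $(G;\oplus)$ has no proper nontrivial subgroup, hence $(G;\oplus)\cong\mathbb Z_{p}$ with $p$ prime, and $\mathbf G$ is the corresponding linear algebra.

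Finally, letting $K_{0}=\Delta,K_{1},\dots,K_{p-1}$ be the $\Xi$-classes indexed by $\mathbb Z_{p}$ via this identification, I would set $\zeta_{0}:=\bigcup_{t\in\mathbb Z_{p}}K_{t}\times\{t\}$; it is a subalgebra (a preimage of the $\Xi$-quotient under the labelling map), satisfies $(x_{1},x_{2},0)\in\zeta_{0}\Leftrightarrow(x_{1},x_{2})\in K_{0}=\Delta$, and $\proj_{1,2}\zeta_{0}=\bigcup_{t}K_{t}=\sigma^{*}$; pulling back along $\mathbf A\to\mathbf A/\sigma$ in the first two coordinates gives $\zeta$. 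I expect the main obstacle to be the third paragraph --- establishing the $w$-compatible abelian group structure on the component set, together with the preliminary normalization of $\rho$; both hypotheses on $\rho$ enter there (fullness of $\rho(x,x,y,y)$ for invertibility of $\oplus$, and $\proj_{1,2}\rho\supsetneq\sigma$ for totality after trimming), whereas primality of $p$ then falls out automatically from the absence of proper nontrivial subgroups.
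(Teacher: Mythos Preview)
The paper does not prove this lemma at all: it is imported verbatim as Corollary~8.10.1 of \cite{MyProofCSP}, so there is no in-paper argument to compare your proposal against. I therefore assess the proposal on its own merits.

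Your overall architecture --- pass to $\mathbf A/\sigma$ so that $\sigma$ becomes the diagonal, trim the bridge so that both pair-projections equal $\sigma^{*}$, take the equivalence closure $\Xi$ of the bridge viewed as a binary relation on $\sigma^{*}$, and read off $\zeta$ from the $\Xi$-classes --- is the natural shape of the argument, and your verification that $\Delta$ is a single $\Xi$-class and that $\mathbf G=(\sigma^{*};w)/\Xi$ has no subuniverse strictly between $\{0\}$ and $G$ is correct.

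The genuine gap is exactly where you flag it: the construction of the abelian group structure on $G$ via $[(a,b)]\oplus[(b,c)]:=[(a,c)]$. Three independent things are unproved. First, for the right-hand side even to make sense you need $(a,c)\in\sigma^{*}$ whenever $(a,b),(b,c)\in\sigma^{*}$, i.e.\ that $\sigma^{*}$ is transitive; nothing in the definition of $\sigma^{*}$ as the minimal $\sigma$-compatible invariant relation above $\sigma$ gives this, and it does not follow from $\proj_{1,2}\rho=\proj_{3,4}\rho=\sigma^{*}$ either. Second, totality requires that any two $\Xi$-classes admit representatives with a common middle coordinate; this again does not follow from the projection condition on $\rho$, which only says each pair in $\sigma^{*}$ is $R$-linked to some pair, not that each class meets every fibre $\{b\}\times A$. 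Third, and most seriously, ``well-definedness from $\Xi$ being a congruence'' conflates two different compatibilities: $\Xi$ is a congruence of the algebra $(\sigma^{*};w)$, i.e.\ compatible with $w$, but your $\oplus$ is built from relational composition of pairs, which is not a term operation of that algebra. There is no direct reason why $(a,b)\mathrel{\Xi}(a',b')$ and $(b,c)\mathrel{\Xi}(b',c')$ should force $(a,c)\mathrel{\Xi}(a',c')$.

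What actually carries the weight in \cite{MyProofCSP} is not a bare-hands group construction but the bridge machinery developed there: one uses the reflexive bridge with full diagonal to show that the cover $\sigma^{*}$ over $\sigma$ is abelian in the commutator-theoretic sense (the bridge is essentially a witness of centrality), and then minimality of $\sigma^{*}$ forces the quotient to be a \emph{simple} abelian algebra with a one-element subuniverse, hence term-equivalent to $(\mathbb Z_{p};x_{1}+\dots+x_{m})$ for a prime $p$. If you want to push your concrete approach through, the missing ingredient is precisely this: extract from $\rho$ (together with $w$) a Maltsev-type behaviour on $\sigma^{*}$ modulo $\sigma$ that forces the ternary subalgebra $\{((a,b),(b,c),(a,c))\}\subseteq(\sigma^{*})^{3}$ to descend to the graph of a group operation on $G$. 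That step is real work and is where the citation to \cite{MyProofCSP} is doing the lifting.
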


\begin{lem}\label{notIrreducibleImplies}
Suppose $\Theta$ is a cycle-consistent irreducible CSP instance without a solution,
every constraint relation of $\Theta$ has the parallelogram property,
the congruence $\ConOne(\rho,1)$ is not irreducible for some constraint relation $((x_{i_1},\ldots,x_{i_t}),\rho)$,
$\Theta'$ is obtained from $\Theta$ by replacement of $((x_{i_1},\ldots,x_{i_t}),\rho)$ by the corresponding congruence-weakened constraints.
Then $\Theta'$ has no solutions.
\end{lem}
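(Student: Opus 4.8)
The plan is to argue by contradiction: suppose $\Theta$ has no solution but $\Theta'$ has one, and derive a contradiction. Write $C=((x_{i_1},\ldots,x_{i_t}),\rho)$ and let $\sigma_1,\ldots,\sigma_k$ be the minimal congruences on $D_{i_1}$ strictly above $\ConOne(\rho,1)$, so that $\Theta'$ is $\Theta$ with $C$ replaced by the constraints $C_j=((x_{i_1},\ldots,x_{i_t}),\rho_j)$ where $\rho_j(y_1,\ldots,y_t)=\exists z\;\rho(z,y_2,\ldots,y_t)\wedge\sigma_j(z,y_1)$. Since $\Theta$ is cycle‑consistent it is $1$‑consistent, so $\rho$ is subdirect; hence $\rho\subsetneq\rho_j$ and each $C_j$ is a constraint weaker than $C$, so $\Theta'$ is obtained from $\Theta$ by weakening the single constraint $C$.

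As a warm‑up observation, the instance $\Theta'$ has the same solution set as the instance obtained from $\Theta$ by replacing $\rho$ with $\bigcap_j\rho_j$, and a fibrewise computation shows that $\bigcap_j\rho_j$ is exactly the saturation of $\rho$ in its first coordinate by $\tau:=\bigcap_j\sigma_j$. If $D_{i_1}/\ConOne(\rho,1)$ is not subdirectly irreducible, then $\tau=\ConOne(\rho,1)$; since $\rho$ has the parallelogram property its first variable is rectangular, so saturation by $\ConOne(\rho,1)$ does not change $\rho$, whence $\bigcap_j\rho_j=\rho$ and $\Theta'$ has no solution — contradiction. Thus we may assume $D_{i_1}/\ConOne(\rho,1)$ is subdirectly irreducible, while $\ConOne(\rho,1)$ is still reducible; this is the substantive case.

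For the main argument I would weaken $\Theta$ one constraint at a time — replacing a constraint by all constraints weaker than it — re‑establishing cycle‑consistency and checking irreducibility after each step and keeping the instance without a solution; the domain reductions forced along the way are instances on strictly smaller domains and are absorbed into an induction on the instance size. This terminates in a cycle‑consistent, irreducible, solution‑free crucial instance $\Theta^c$. Using that $\Theta'$ already has a solution, one shows that $\Theta^c$ must retain a constraint $C'=((x_{i_1},\ldots,x_{i_t}),\rho')$ with $\rho\subseteq\rho'$ whose first‑coordinate congruence $\ConOne(\rho',1)$ is still reducible: morally, the relation of $C$ cannot be weakened past $\ConOne(\rho,1)$ in its first coordinate without creating a solution, because even the least such weakening (to $\bigcap_j\rho_j$) already yields the solution of $\Theta'$. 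But then $C'$ is crucial in the cycle‑consistent irreducible instance $\Theta^c$, so Theorem~\ref{ParPropertyForAll} makes $\rho'$ a critical relation with the parallelogram property; cycle‑consistency makes $\rho'$ subdirect and the parallelogram property makes its first variable rectangular, so Lemma~\ref{CriticalMeansIrreducible} forces $\ConOne(\rho',1)$ to be irreducible — contradicting the previous sentence. This contradiction proves $\Theta'$ has no solution.

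The main obstacle is the weakening step: one must organize the reduction to a crucial instance so that (i) cycle‑consistency and irreducibility either survive or are handled by the inductive hypothesis on smaller domains, and (ii) the resulting crucial instance genuinely keeps a descendant of $C$ whose first‑coordinate congruence is still reducible, rather than having "repaired" that reducibility along the way. Making (ii) precise — essentially showing that the weakening step which would destroy reducibility of $\ConOne(\cdot,1)$ is exactly the step which would make the instance solvable — is where the hypotheses that $\Theta$ is irreducible and that all constraint relations have the parallelogram property are used in full.
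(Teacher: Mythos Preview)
Your warm-up case is correct, but the ``substantive case'' is where the proposal stops being a proof, and you say so yourself: the obstacle is showing that the crucial instance $\Theta^c$ retains a descendant of $C$ whose first-coordinate congruence is still $\ConOne(\rho,1)$ (hence reducible), and you give only a ``moral'' argument. This gap is real. Passing to a crucial instance may project constraints to fewer variables, split $C$ into several pieces, or simply enlarge $\ConOne(\cdot,1)$ to some irreducible congruence above $\ConOne(\rho,1)$; your observation that ``even the least weakening to $\bigcap_j\rho_j$ already yields a solution'' constrains only one specific weakening direction and does not by itself control what survives after the full weaken--reconsist--reduce loop you describe.

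The paper's proof avoids the case split entirely by using the definition of irreducibility literally: a congruence is reducible when it is an intersection of \emph{binary relations compatible with it}, not merely of congruences. So non-irreducibility of $\ConOne(\rho,1)$ directly supplies compatible relations $\delta_1,\dots,\delta_s$ with $\bigcap_i\delta_i=\ConOne(\rho,1)$ on the nose --- there is no residual ``subdirectly irreducible'' case. With $\rho_i(y_1,\dots,y_t)=\exists z\,\rho(z,y_2,\dots,y_t)\wedge\delta_i(z,y_1)$, rectangularity of $\rho$ gives $\bigcap_i\rho_i=\rho$, so replacing $C$ by the $\rho_i$'s keeps the instance unsolvable; Theorem~\ref{ParPropertyForAll} then shows that passing to the parallelogram closures $\hat\rho_i$ still creates no solution. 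Each $\hat\rho_i$ contains $\rho$ and has its first variable compatible with the congruence $\ConOne(\hat\rho_i,1)\supsetneq\ConOne(\rho,1)$, hence with some minimal congruence $\sigma_j\supsetneq\ConOne(\rho,1)$, so $\hat\rho_i\supseteq\rho_{\sigma_j}$. Thus the instance with the $\hat\rho_i$'s is already weaker than $\Theta'$, and $\Theta'$ has no solution. The missing idea in your approach is precisely this use of general compatible relations rather than congruences.
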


\begin{proof}
Consider binary relations $\delta_{1},\ldots,\delta_{s}$ compatible with $\ConOne(\rho,1)$
such that $\delta_{1}\cap\dots\cap\delta_{s} = \ConOne(\rho,1)$.
Put $\rho_{i}(x_{i_1},\ldots,x_{i_t}) = \exists z \;\rho(z,x_{i_2},\ldots,x_{i_t})\wedge \delta_{i}(z,x_{i_1}),$
and replace the constraint
$((x_{i_1},\ldots,x_{i_t}),\rho)$
by the
constraints
$((x_{i_1},\ldots,x_{i_t}),\rho_{1}),\ldots,((x_{i_1},\ldots,x_{i_t}),\rho_{s})$.
Since $\rho$ has the parallelogram property,
the obtained instance still does not have a solution.
By Theorem~\ref{ParPropertyForAll},
if we replace every new constraint relation by the corresponding relation having the parallelogram property,
then we cannot get a solution.
Therefore, $\Theta'$ has no solutions.
\end{proof}

Similarly to Theorem 9.8 from \cite{MyProofCSP},
we prove the following theorem.

\begin{thm}\label{addingOneVariable}
Suppose the following conditions hold:
\begin{enumerate}
\item $\Theta$ is a linked cycle-consistent irreducible CSP instance with domain set
$(D_{1},\ldots,D_{n})$;
\item there does not exist a binary absorbing subuniverse or a center on $D_{j}$ for every $j$;
\item suppose we replace every constraint of $\Theta$ by the corresponding congruence-weakened constraints,
then replace every constraint relation $\rho$ by the minimal constraint relation $\rho'\supseteq\rho$ having the parallelogram property; then the obtained instance
has a solution with $x_{i} = b$ for every $i$ and $b\in D_{i}$;
\item $D^{(1)}= (D_{1}^{(1)},\ldots,D_{n}^{(1)})$ is a minimal linear reduction of $\Theta$;
\item $\Theta$ has no solutions in $D^{(1)}$;
\item if we replace any constraint by the corresponding congruence-weakened constraints
then the obtained instance has a solution in $D^{(1)}$;
\item if we replace any constraint by its projections onto all variables but one appearing in the constraint
then the obtained instance has a solution in $D^{(1)}$.
\end{enumerate}
Then there exists a constraint $((x_{i_1},\ldots,x_{i_s}),\rho)$ of $\Theta$ and a relation
$\xi\subseteq D_{i_{1}}\times D_{i_{1}}\times \mathbb Z_{p}$
such that
$(x_{1},x_{2},0)\in \xi\Leftrightarrow (x_{1},x_{2})\in\ConOne(\rho,1)$
, $\proj_{1,2}(\xi)\supsetneq\ConOne(\rho,1)$,
and $\ConOne(\rho,1)$ is a congruence.
\end{thm}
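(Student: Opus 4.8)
The plan is to follow the strategy of Theorem 9.8 from \cite{MyProofCSP}, but to first upgrade the hypotheses so that the machinery developed there — in particular, Theorems~\ref{FindPerfectConstraint}, \ref{ParPropertyForAll}, and Lemmas~\ref{PathInConnectedComponent}, \ref{LinkedLink}, \ref{CriticalMeansIrreducible} — becomes applicable. The obstruction to a direct quotation of 9.8 is that we are not told $\Theta$ is crucial in $D^{(1)}$; we only know conditions~5, 6, and~7, which are the ``crucial-like'' statements about replacing a constraint either by its congruence-weakened version or by its projections.

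First I would reduce to the case where every constraint relation is a critical relation with the parallelogram property and where $\ConOne(\rho,1)$ is an irreducible congruence for every constraint $((x_{i_1},\ldots,x_{i_s}),\rho)$. Indeed, replace each constraint relation by the minimal relation having the parallelogram property; by Theorem~\ref{ParPropertyMain} this cannot create a solution in $D^{(1)}$, so condition~5 survives, and conditions~6 and~7 are only strengthened (a weaker constraint has a weaker parallelogram closure). Then decompose each relation into its essential representation; each essential component is again parallelogram (a parallelogram relation's essential factors inherit the property) and, being essential and not further decomposable as an intersection of subalgebras, is critical. A parallelogram (hence rectangular) critical subdirect relation has $\ConOne(\cdot,1)$ irreducible by Lemma~\ref{CriticalMeansIrreducible}. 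So after this preprocessing we may assume all the structural hypotheses of the cited lemmas.

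Next I would case-split on whether $\Theta$ (after the reduction above) is connected. If it is \emph{not} connected, apply Theorem~\ref{FindPerfectConstraint} to obtain $\Theta'\in\Expanded(\Theta)$ that is crucial in $D^{(1)}$ and contains a linked connected component with non-subdirect solution set — then work inside that component as in 9.8. If $\Theta$ \emph{is} connected, then since $\Theta$ is linked (hypothesis~1), it is a cycle-consistent linked connected instance whose constraint relations are critical rectangular; Lemma~\ref{PathInConnectedComponent} then yields, for a suitably chosen constraint $C=((x_{i_1},\ldots,x_{i_s}),\rho)$ and its first variable $x_{i_1}$, a bridge $\delta$ from $\ConOne(C,x_{i_1})$ to itself such that $\delta(x,x,y,y)$ defines a full relation. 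Feeding this bridge into Lemma~\ref{LinkedLink} with $\sigma=\ConOne(\rho,1)$ (which is irreducible by the reduction step) produces a prime $p$ and a relation $\xi=\zeta\subseteq D_{i_1}\times D_{i_1}\times\mathbb Z_p$ with $(x_1,x_2,0)\in\xi\Leftrightarrow(x_1,x_2)\in\ConOne(\rho,1)$ and $\proj_{1,2}(\xi)=\ConOne(\rho,1)^{*}\supsetneq\ConOne(\rho,1)$, and $\ConOne(\rho,1)$ is a congruence because its variable is rectangular — exactly the conclusion.

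The main obstacle I expect is the \emph{choice of the constraint $C$} and the verification that the hypotheses of Lemma~\ref{PathInConnectedComponent} (or of Theorem~\ref{FindPerfectConstraint}) genuinely hold after preprocessing: one must check that replacing constraint relations by their parallelogram closures and essential representations does not destroy cycle-consistency, irreducibility, linkedness, or the reduction's failure to have a solution, and that conditions~6 and~7 are used precisely to guarantee that $\proj_{1,2}(\xi)$ strictly exceeds $\ConOne(\rho,1)$ rather than merely equalling it — i.e., that the relevant constraint cannot be harmlessly weakened or projected away, which is what forces the bridge to be nontrivial. Absorption-freeness and center-freeness (hypothesis~2) enter exactly at the point where Theorem~\ref{FindPerfectConstraint} and the connected-case analysis need that no domain collapses under absorption, mirroring their role in the original proof of 9.8.
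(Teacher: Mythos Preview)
Your high-level strategy (case-split on connectedness, then invoke Theorem~\ref{FindPerfectConstraint} or Lemmas~\ref{PathInConnectedComponent} and~\ref{LinkedLink}) matches the paper's, but your preprocessing step contains a genuine gap. You claim that after taking parallelogram closures and then essential representations, each resulting relation is \emph{critical}; this is not justified. ``Essential'' means the relation is not a conjunction of relations of \emph{smaller arity}, whereas ``critical'' requires that it is not an intersection of other subalgebras of the \emph{same} product --- these are different notions, and your argument silently conflates them. Without criticality you cannot invoke Lemma~\ref{CriticalMeansIrreducible} or Lemma~\ref{PathInConnectedComponent}. More importantly, Theorem~\ref{FindPerfectConstraint} requires the instance to be \emph{crucial} in $D^{(1)}$, and nothing in your preprocessing produces a crucial instance. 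The paper handles both issues at once: it first replaces constraints by weaker ones until the instance $\Theta'$ is crucial in $D^{(1)}$; then Theorem~\ref{ParPropertyMain} forces every surviving constraint relation to be critical with the parallelogram property automatically. Condition~7 is used precisely here, to guarantee that this weakening does not destroy linkedness (not, as you suggest, to force $\proj_{1,2}(\xi)\supsetneq\ConOne(\rho,1)$).

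There is a second gap you only partially acknowledge: the conclusion demands a constraint $\rho$ of the \emph{original} instance $\Theta$, not of the crucial instance $\Theta'$ or of the expanded covering $\Theta''$. After locating a constraint $\rho'$ in the linked connected component, the paper uses conditions~3 and~6 to find an ancestor $\rho$ in $\Theta$ with $\ConOne(\rho,1)=\ConOne(\rho',1)$; this is what lets the bridge and the relation $\xi$ (which are built from $\ConOne(\rho',1)$) be read as statements about $\ConOne(\rho,1)$. Your proposal does not explain how to pull the conclusion back to $\Theta$, and the role you assign to conditions~3 and~6 is not the one they actually play.
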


\begin{proof}
Assume the contrary.
First, we want to make our instance crucial in $D^{(1)}$.
To do this we replace our constraints by all weaker constraints while we still
do not have a solution in $D^{(1)}$.
The obtained instance we denote by $\Theta'$.
Since $\Theta$ is linked, condition 7 guarantees that $\Theta'$ is also linked.
By Theorem~\ref{ParPropertyMain}, every constraint in $\Theta'$ has the parallelogram property.
Suppose $\Theta'$ is not connected.
Then by Theorem~\ref{FindPerfectConstraint} there exists an instance $\Theta''\in\Expanded(\Theta')$
that is crucial in $D^{(1)}$ and contains a linked connected component $\Omega$ such that
the solution set of $\Omega$ is not subdirect.
By condition 3, there exists a constraint relation $\rho'$ from $\Omega$ such that
for their ancestor $\rho$ from $\Theta$
we have $\ConOne(\rho,1) = \ConOne(\rho',1)$.

If $\Theta'$ is connected, then $\Theta'$ is a linked connected component itself
and we choose a constraint relation $\rho'$ from $\Theta'$
such that $\ConOne(\rho,1) = \ConOne(\rho',1)$, where $\rho$ is the ancestor of $\rho'$ in $\Theta$.
The relations $\rho'$ and $\rho$ can be found because of condition 6.

By Theorem~\ref{ParPropertyMain} and Lemma~\ref{CriticalMeansIrreducible}, $\ConOne(\rho',1)$ is an irreducible congruence.
By Lemma~\ref{PathInConnectedComponent},
there exists a bridge $\delta$
from $\ConOne(\rho',1)$
to $\ConOne(\rho',1)$ such that $\delta(x,x,y,y)$ defines a full relation.
By Lemma~\ref{LinkedLink},
there exists a relation
$\xi\subseteq D_{i_{1}}\times D_{i_{1}}\times \mathbb Z_{p}$
such that
$(x_{1},x_{2},0)\in \xi\Leftrightarrow (x_{1},x_{2})\in\ConOne(\rho',1)$
and $\proj_{1,2}(\xi) = \cover{\ConOne(\rho',1)}$.
\end{proof}


\bibliographystyle{plain}
\bibliography{refs}

\end{document}